\def\E{\mathbb{E}}
\def\MP#1{\textbf{[[#1]]}}
\newtheorem{observation}{Observation}[section]
\def\pr {\mathrm{Pr}}
\long\def\hide#1{{}}
\begin{document}

\lhead{} 
\rhead{} 
\lfoot{} 
\cfoot{} 
\rfoot{} 
\chead{\bfseries Proceedings Article}  


\title{Sybil-proof Mechanisms in Query Incentive Networks}
\author{WEI CHEN
\affil{Microsoft Research Asia, \texttt{weic@microsoft.com}}
YAJUN WANG
\affil{Microsoft Research Asia, \texttt{yajunw@microsoft.com}}
DONGXIAO YU
\affil{The University of Hong Kong, \texttt{dxyu@cs.hku.hk}}
LI ZHANG
\affil{Microsoft Research Silicon Valley Lab, \texttt{lzha@microsoft.com}}}

%

\category{G.2}{Mathematics of Computing}{Discrete Mathematics}\category{G.3}{Mathematics of Computing}{Probability and Statistics}\category{F.2.0}{Analysis of Algorithms and Problem Complexity}{General}\category{J.4}{Social and Behavioral Sciences}{Economics}
%

\terms{Economics, Theory}

\keywords{query incentive networks; query incentive mechanisms;
sybil-proof mechanisms; branching processes}

%

\begin{abstract}
In this paper, we study incentive mechanisms for retrieving
information from networked agents.  Following the model in
\cite{Kleinberg2005}, the agents are represented as nodes in an
infinite tree, which is generated by a random branching process. A
query is issued by the root, and each node possesses an answer with
an independent probability $p=1/n$. Further, each node in the tree
acts strategically to maximize its own payoff.  In order to encourage
the agents to participate in the information acquisition process, an
incentive mechanism is needed to reward agents who provide the
information as well as agents who help to facilitate such acquisition.

We focus on designing efficient sybil-proof incentive mechanisms, i.e., which are robust to fake identity attacks.
We propose a family of mechanisms, called the {\em direct referral}
(DR) mechanisms, which allocate most reward to the information holder
as well as its direct parent (or direct referral).  We show that, when
designed properly, the direct referral mechanism is sybil-proof and
efficient. In particular, we show that we may achieve an expected cost
of $O(h^2)$ for propagating the query down $h$ levels for any
branching factor $b>1$.  This result exponentially improves on previous work when requiring to find an answer with high probability.
When the underlying network is a deterministic chain, our mechanism is optimal under some mild assumptions.
In addition, due to its simple reward structure, the DR
mechanism might have good chance to be adopted in practice.


\end{abstract}

\begin{bottomstuff}
\end{bottomstuff}
\maketitle
\thispagestyle{fancy}

\section{Introduction}

%

Many information systems, e.g., peer-to-peer networks or social networks, are designed such that queries are answered by networked
agents instead of a centralized authority.  In such a system, the
query propagates in the network with the hope that it will eventually
reach some agents which hold (and return) an answer.  For such query
models, it is important to design an incentive mechanism to encourage
the query propagation and the return of the answer.  In addition, we
would like the mechanism to be \emph{efficient}, with low expected
cost for the root, and \emph{sybil-proof}, discouraging the
agents to disrupt or delay the query process by producing fake
identities.  In this paper, we will present a family of mechanisms
which can achieve these two goals simultaneously.

We 
{mainly} follow the query incentive network model invented by
\citeN{Kleinberg2005}. Under the model, each agent is represented as a
node in a fixed infinite $d$-ary tree. A query is issued by the root,
where each node in the tree may have an answer with a fixed
probability $p$. Information, query or answer, can propagate along the
edges in the tree which are ``turned on'' according to a random branching
process.  Each node, with a local view of the tree, can decide if it
continues to propagate the query to its children and to forward back an
answer to its parent.  The nodes are self-interested and risk-neutral so they choose
the actions that maximize their expected payoff.
In the model, there is a fixed unit cost in forwarding
a selected answer along each edge. On the other hand, it is free to propagate the queries.

In~\cite{Kleinberg2005}, the authors considered incentive mechanisms
in the form of \emph{fixed-payment contract}, where each node offers a fixed
amount of reward, which is in turn a part of the reward promised by its parent, to its
children, under the condition that the child propagates back an
answer (and accepted by the root).  In the paper, the authors obtain
the lowest possible cost (or reward) needed for such mechanisms to
retrieve an answer with a constant probability.  The cost depends on
the \emph{rarity} of the answer, defined as $n=1/p$, and the branching
factor $b$, the expected number of edges to children that are ``turned on'' by the
random process. The paper showed a phase transition phenomenon at $b=2$: when
$b>2$, the mechanism can achieve low cost of
$O(\log n)$; when $b<2$, the cost explodes to $n^{\Omega(1)}$, which is exponential to the
number of levels needed to explore.

The work of~\citeN{Kleinberg2005} motivated many subsequent works.  In
particular, \citeN{Arcaute2007} extended the results to more general
random processes, and observed the similar phase transition phenomenon.
\citeN{Cebrian2012} analyzed a different type of mechanism, called
\emph{split contract}, which models a successful scheme used by the
winner~\cite{Pickard2011} of the DARPA Network Challenge (also known
as Red Balloon Challenge)~\cite{DARPA2009}.  Roughly speaking, in the
split contract mechanism, the answer holder receives the specified
reward, while each ``referral'' on the path to the root receives a
fraction of its child's reward. \citeN{Cebrian2012}
showed that the split contract can achieve low cost even when $b<2$.
However, unlike the fixed-payment contract scheme, the split contract is not sybil-proof, as one can
produce fake identities in the tree to obtain higher expected payoff.

In our paper, we propose a new family of mechanisms which distribute most reward
to, in addition to the agent who provides the answer, the direct referral,
i.e., its parent. We call such query incentive mechanisms as {\em
Direct Referral (DR)}
mechanisms.  We show that the direct
referral mechanism, when designed properly, can discourage sybils,
i.e., it is not to the {agents'} interest to create fake identities, as
well as obtain a low cost to retrieve an answer for any $b>1$.
In particular, the scheme still has a low cost when the success probability
is close to $1-\zeta$, e.g., with probability $1-\zeta - O(1/n)$, where $\zeta$ is the extinction probability of the branching process.
Both fixed-payment contracts and split contracts incur high cost when the success probability approaches $1-\zeta$.

The following two theorems summarize the main results.
\def\thmChain
{
If the underlying branching process is a deterministic chain, there exists a sybil-proof direct referral incentive mechanism with expected cost $O(nh^2)$, where $h$ is the desired level of agents the root wants to propagate the query and $n$ is the answer rarity.
}
\begin{theorem}
\label{thm:chain}
\thmChain
\end{theorem}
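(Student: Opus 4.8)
The plan is to write down an explicit DR mechanism on the chain and then verify, in order, honest behaviour, sybil-proofness, and the cost bound. Because queries propagate for free, in any mechanism every node relays the query, so the query reaches every level up to the root's cap $h$; the relevant random variable is therefore $K$, the depth of the shallowest node that holds an answer, which is geometric with success probability $1/n$ (so $\Pr[K=k]=(1-1/n)^{k-1}/n$ for $k\ge 1$, and with probability $(1-1/n)^h$ there is no answer in the explored part and the root pays nothing). The mechanism I would use announces, for an answer returned from depth $k\le h$, a total reward $R(k)$ split so that the answer holder (depth $k$) receives $a_k$, its direct parent receives $b_k$, and every interior node at depth $j\le k-2$ receives a residual $r_{j,k}$; the design targets are (i)~$a_k,b_k,r_{j,k}\ge 1$, the unit forwarding cost, so that every node on the answer path strictly (with favourable tie-breaking) prefers to forward; (ii)~the schedule is immune to fake-identity insertions; and (iii)~$R(k)=a_k+b_k+\sum_{j\le k-2}r_{j,k}=O(nh^2)$ uniformly over $k\le h$.

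Given the announced schedule I would first check incentive compatibility along the unique path from the root down to depth $K$: forwarding the answer is a best response because each node's reward exceeds its unit cost and, there being no branching on a chain, forwarding perturbs no other payment; propagating the query is a best response because it is costless and only raises the chance of lying on a paid path; and a node that owns an answer prefers to reveal it rather than suppress it and later relay a deeper one --- this uses $a_j\ge r_{j,k}$ for $k>j$. (If one prefers to let a node wait for a deeper answer, the argument still goes through: the query still reaches level $h$ and the deepest answer holder plays the role of the source.)

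The core of the proof, and the step I expect to be the main obstacle, is sybil-proofness. I would enumerate the fake-identity deviations of a node $v$ at depth $j$: (a)~$v$ owns the answer and spawns a fake child through which it routes the answer, trying to collect $a_{k+1}$ (as the fake source) plus $b_{k+1}$ (as the real direct referral); (b)~$v$ is the direct referral and spawns a fake child, collecting $b$ through the fake while itself sliding to the grandparent slot; (c)~$v$ is interior and splices a fake sub-chain of length $m$ below itself, shifting the answer and everything below $v$ down by $m$. Writing out the payoff of each deviation yields the inequalities the schedule must satisfy: $a_k$ and $b_k$ must decay in $k$ fast enough to kill (a) and (b), and the residuals must satisfy $\sum_{l=j}^{j+m-1}r_{l,k+m-1}\le r_{j,k}$ for every admissible $m$ to kill (c). The observation that makes these constraints finite --- and is special to the chain --- is that the root's cap bounds the splice length: pushing the answer past level $h$ renders it worthless and the genuine deep answer holder will not forward for a sub-unit reward, so $m\le h-k$. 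I would then exhibit one schedule that solves this system with every payment $O(nh^2)$ (the rarity $n$ enters the base scale through this balancing, since the honest source lies at expected depth $\Theta(n)$ and the residuals must stay above the unit cost that deep while still decaying enough to defeat (c)), and check that this schedule also satisfies the incentive inequalities of the previous paragraph.

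The cost estimate is then routine: the root pays $R(K)$ when $K\le h$ and nothing otherwise, so the expected cost equals $\sum_{k=1}^{h}\frac{(1-1/n)^{k-1}}{n}R(k)$, which by target (iii) is at most $\frac{O(nh^2)}{n}\sum_{k=1}^{h}(1-1/n)^{k-1}=O(nh^2)$, using $\sum_{k\ge 1}(1-1/n)^{k-1}=n$. (A sharper schedule with $R(k)=O(nkh)$ gives the same bound via $\sum_{k\ge 1}k(1-1/n)^{k-1}=n^2$ together with the truncation $\sum_{k=1}^{h}k(1-1/n)^{k-1}=O(\min\{h^2,n^2\})$, in every regime of $h$ versus $n$.) Hence the only genuinely delicate point is exhibiting a single polynomial-size reward schedule that simultaneously meets the no-suppression/forwarding inequalities and all the sybil inequalities; once that schedule is in hand, everything else is mechanical.
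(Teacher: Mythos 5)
Your skeleton matches the paper's: an oblivious DR reward schedule on the chain, a case analysis of sybil deviations split by whether the deviator holds the answer, and a cost computation against the geometric law of the first-answer depth. But the proposal stops exactly where the proof begins: you never exhibit the schedule, and you yourself flag its existence as ``the only genuinely delicate point.'' The theorem \emph{is} the feasibility of that constraint system at scale $O(nh^2)$, so deferring it leaves the proof with no content. Worse, the constraint you write down for the interior-splice deviation, $\sum_{l=j}^{j+m-1}r_{l,k+m-1}\le r_{j,k}$ for every realized answer depth $k$, is a \emph{pointwise} (per-configuration) condition, and it is infeasible under your own normalization $r_{j,k}\ge 1$: an interior node that splices $m$ fakes collects $m+1$ unit residuals in every realization where the answer stays within the cap $h$, versus the single unit it gets honestly, so no normalized schedule can satisfy it. Sybil-proofness here is only a Nash (expected-utility) condition, and the deviation must be priced \emph{before} the deviator learns where the answer lies below it: the gain of $m$ extra units, weighted by the probability $P_{h-j-m}$ that an answer still falls within range, must be offset by the drop in the \emph{expected} direct-referral reward caused by sliding $m$ levels deeper. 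This probabilistic trade-off is the one idea the chain construction turns on, and it is absent from the proposal.

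Concretely, the paper's schedule is forced by two telescoping identities. Writing $R_i$ for the expected direct-referral reward of the node at depth $i$ (conditioned on no answer above it) and $P_j$ for the probability that an answer appears in the next $j$ nodes, one chooses $r(i,1)=nR_{i+1}+P_{h-i-1}$ so that $R_i=R_{i+1}+P_{h-i-1}$; this makes the expected loss in referral reward from each level of sliding exactly dominate the extra unit residual a sybil collects, giving $D(i,k)\le D(i,k-1)$ by induction. For the answer holder one sets $a_i=\sum_{t=i}^{h-1}r(t,1)+1$, so $a_i=r(i,1)+a_{i+1}$ and routing the answer through fakes is exactly unprofitable (your deviation~(a), with the intermediate fakes earning only the unit floor $1\le r(\cdot,1)$). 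Unwinding these recursions gives $R_i\le hP_h$, hence $r(i,1)\le nhP_h+1$ and $a_i\le nh^2P_h+h$; the expected cost is then $O(P_h\cdot\max_i a_i)=O(P_h^2nh^2)=O(nh^2)$, consistent with your final computation. Note also that the answer-holder reward is forced to absorb the full sum of all deeper referral rewards --- that is where the $h^2$ factor comes from and why the paper can also prove a matching lower bound --- so your hope of a sharper schedule with $R(k)=O(nkh)$ for small $k$ is not available within this family.
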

Furthermore, the direct referral mechanism is optimal on the chain among all the sybil-proof query incentive mechanisms which satisfy some mild assumptions (Section~\ref{subsec:opt}).


\def\thmTree
{
For any branching process with branching factor $b>1$, there is a constant $p_b > 0$ such that for any answer rarity $n$ with  $0<1/n<p_b$, there exists a sybil-proof direct referral query incentive mechanism with expected cost $O(h^2)$, where $h$ is the desired level of agents the root wants to propagate the query.
}
\begin{theorem}
\label{thm:tree}
\thmTree
\end{theorem}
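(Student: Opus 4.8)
The plan is to exhibit a concrete direct referral mechanism --- the tree analogue of the one behind Theorem~\ref{thm:chain} --- and then verify in turn that it is incentive compatible (rational nodes propagate the query and answer holders report), that it is sybil-proof, and that its expected cost is $O(h^2)$. In the mechanism the root attaches a reward of $\Theta(h^2)$ to the query, and this promised reward is decreased as the query descends so that at level $\ell$ the amount still promised is $\Theta((h-\ell)^2)$; below level $h$ nothing is promised, which effectively truncates the query there. When an answer surfaces from a node at level $L$, the remaining reward is distributed in DR fashion: the holder at level $L$ gets a $\Theta((h-L)^2)$ share, its parent (the direct referral) gets a $\Theta(h-L)$ share, and every other node on the root-to-holder path gets exactly its unit forwarding cost. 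The factor $n$ that appears in the chain bound disappears because, with $b>1$, a node at level $\ell$ already has about $b^{h-\ell}$ descendants within reach, so the conditional probability that its subtree contains an answer is bounded below by a constant rather than by $1/n$ once $h-\ell$ is not too small; this slack is what lets the construction work for every $b>1$, and not only for $b>2$ as with fixed-payment contracts.

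Incentive compatibility I would establish by backward induction on the level. Let $g_\ell$ be the least promised reward for which a rational node at level $\ell$ best-responds by propagating the query with the prescribed sub-contract (and for which an answer holder best-responds by reporting); the goal is $g_0=O(h^2)$. The induction turns on the best-response condition at a level-$\ell$ node, which weighs propagating with sub-offer $g_{\ell+1}$ --- which cascades correctly by the inductive hypothesis --- against truncating the query and, more importantly, against \emph{under-funding} its subtree to enlarge its own margin; a computation over the offspring distribution (this is where $b>1$ and the assumption $1/n<p_b$ are used) shows that the reward promised at level $\ell$ need exceed that at level $\ell+1$ by only $O(h-\ell)$, which yields the recursion $g_\ell\le g_{\ell+1}+O(h-\ell)$ and hence $g_0=O(h^2)$. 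The delicate point is precisely this under-funding comparison: a node could in principle trade a lower success probability for a larger personal share, and the per-level decrement must be tuned so that the prescribed funding level is the node's genuine optimum. Making this step rigorous, and controlling it against the randomness of the branching process, is where essentially all the technical effort lies.

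Sybil-proofness follows from monotonicity of the reward schedule. A coalition consisting of one genuine node $v$ together with arbitrarily many fake identities can do nothing but re-route the root-to-holder path through some of its fakes --- it cannot manufacture an answer. Any fake placed on the path is either an ordinary referral, which collects only its forwarding cost, or the ``direct referral'' of a holder that has thereby been pushed one level deeper; since both the holder share $f(\cdot)$ and the direct referral share $d(\cdot)$ are chosen non-increasing in level and to satisfy the subadditivity $f(L)\ge f(L')+d(L')$ for all $L'>L$, a short cash-flow accounting shows that every fake edge the coalition inserts costs it one unit of forwarding cost while only weakly --- in fact weakly negatively --- affecting the rewards it collects. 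Hence no configuration of sybils is profitable, and truthful behaviour remains a best response even against such deviations.

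Finally, for the cost bound: if $h$ is too small for the query to reach an answer, nothing is returned and the cost is $0$, so the $O(h^2)$ bound is immediate; and when $h$ is at least a sufficiently large constant times $\log_b n$ --- which is necessary for \emph{any} mechanism to succeed --- standard branching-process tail estimates show that conditioned on survival the process has far more than $n$ descendants within $h$ levels, so an answer exists and is returned with probability $1-\zeta-O(1/n)$. Whenever the returned (say, shallowest) answer sits at level $L\le h$, the root pays $f(L)+d(L)$ plus one unit per edge of the length-$L$ path, i.e.\ $O((h-L)^2)+O(h)=O(h^2)$; taking expectations over the branching process and over the answer locations gives expected cost $O(h^2)$. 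As noted, the main obstacle is the incentive step --- coupling the critical-reward recursion to the random offspring distribution and ruling out that a node profits by under-funding its subtree.
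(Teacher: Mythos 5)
Your reward schedule is too small to be sybil-proof, and that is the central gap. The binding constraint does not come from the answer holder (your subadditivity condition $f(L)\ge f(L')+d(L')$ handles that case and matches the paper's $a_i=x_i+a_{i+1}+1$). It comes from a node $v$ at level $i$ that does \emph{not} hold an answer and inserts $j$ sybils between itself and its real children. In the paper's model every node on the answer path is paid at least one unit and sybils incur no forwarding cost (the paper deliberately normalizes rewards instead of charging a cost, precisely to avoid defining a cost for fake identities), so each inserted sybil is a pure gain of one unit whenever the selected answer lies in $v$'s subtree, an event of probability proportional to $\sum_{\ell>i}\lambda_\ell$. The only countervailing loss is that $v$'s direct-referral reward drops from $x_i$ to $x_{i+j}$, and this loss is realized only when the first answer is at level exactly $i+1$, probability proportional to $\lambda_{i+1}$. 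Sybil-proofness therefore forces $x_i-x_{i+j}\ge j\cdot\sum_{\ell>i}\lambda_\ell/\lambda_{i+1}$, and near the root this ratio is of order $n/b^{i}$, i.e.\ exponential in $h$ when $h=\Theta(\log_b n)$ --- vastly larger than the $O(j)$ gap your schedule $d(\ell)=\Theta(h-\ell)$ provides. Your assertion that ``every fake edge costs the coalition one unit of forwarding cost'' inverts the sign of the incentive in this model. Relatedly, your per-realization payment bound $O((h-L)^2)+O(h)$ cannot hold for any sybil-proof DR mechanism here: the paper's rewards near the root are exponential in $h$, and only the \emph{expected} cost is $O(h^2)$ (the paper states this explicitly as a drawback in its conclusion).

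What is missing is therefore the quantitative heart of the argument. One must define the direct-referral rewards in terms of the first-answer distribution, as in Eqn.~(\ref{eqn:xis}), $x_i=\max_{j>i}\{x_j+(j-i)\sum_{\ell>i}\lambda_\ell/\lambda_{i+1}\}$, and then show that the expected cost $\sum_i\lambda_{i}\,x_{i-1}+\sum_i\lambda_i a_i$ is still $O(h^2)$ even though individual rewards can be huge. That step rests on a structural result about the branching process --- Lemma~\ref{lem:lambda_charaterization}, which shows $\{\lambda_i\}$ is single-peaked, grows geometrically up to a level $\ell^1$ with $\lambda_{\ell^1}=\Omega(1)$, and decays geometrically past the peak --- so that $\lambda_{i+1}x_i=O(h-i)$ termwise and the prefix sums $\sum_{i\le j}\lambda_i$ are $O(\lambda_{j+1})$ in the growing phase. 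None of this appears in your sketch; your heuristic that the subtree of a node contains an answer with constant probability is true but does not control the ratio $\sum_{\ell>i}\lambda_\ell/\lambda_{i+1}$ that actually governs the reward gaps. Separately, your ``under-funding'' concern is an artifact of casting the mechanism as cascading contracts; the paper uses a global oblivious reward scheme announced by the root, so nodes have no sub-offers to manipulate and that branch of your induction is vacuous in this framework.
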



Notice that the above bound holds for $b>1$ so the direct referral
mechanism, compared to the fixed-payment contract mechanism, can achieve low cost
for a larger family of branching processes. In addition, the cost (when
$b>1$) has a polynomial dependence on the level $h$ instead of the
rarity $n$. Therefore, in the case of retrieving an answer with high
probability, i.e., a probability which is close to the
extinction probability of the branching process, the expected cost is
still polynomial in $h$, rather than polynomial in $n$ (in this case, $n$ is exponential in $h$).  In contrast,
query incentive networks with either fixed-payment
contracts~\cite{Arcaute2007} or split contracts~\cite{Cebrian2012}
have cost polynomial of $n$ in the high probability case.



The direct referral mechanism has the natural structure of
rewarding the answer holder and the direct referral while the others
only receive minimum compensation for routing the answer.  Such
simplicity might be highly desirable for practical adoption.  On the
other hand, despite the simplicity of the direct referral scheme, we
show that the mechanism can be quite robust, i.e., sybil-proof, and
efficient. Actually, in the case of the infinite chain, we can show
that such scheme is {\em optimal} among all the sybil-proof
mechanisms with some mild assumptions.

Intuitively, by rewarding the direct referral, the DR mechanism
encourages any agent who does not have an answer to propagate the query
as there is a chance that one of its children may have an answer (and
get selected by the root) so the agent can win the direct referral
reward. In addition, for any agent, no matter how many sybils it
creates, at most one of them receives the ``direct referral''
reward. Therefore, if we design the direct referral rewards such that
they decrease {rapidly} enough as the depth increases, the potential gain
of the sybils may be offset by the gap between the direct referral
rewards {in the two different levels.}
Of course, the gap cannot be too large for
it would increase the expected cost for the root.  Indeed, we will
show that with mildly decreasing direct referral rewards, we can
achieve both sybil-proofness and low cost at the same time.

\subsection{Related Work}

The model of query incentive networks was introduced by Kleinberg and Raghavan~\citeyear{Kleinberg2005},
where they considered a simple branching process in an underlying $d$-ary tree, i.e., each edge exists with an independent probability $\frac{b}{d}$ with branching factor $b>1$.
Kleinberg and Raghavan observed an interesting phase-transition phenomenon with fixed-payment contracts. Specifically, when $b>2$, in order to retrieve the answer with constant probability, the reward needed for the root to offer is $O(\log n)$ which is asymptotically optimal.  However, when $1<b<2$, the cost grows to $n^{\Theta(1)}$, i.e., the root needs to pay a reward that is exponentially larger than the expected distance for finding an answer in this case.

\citeN{Arcaute2007} generalized the simple branching process in~\cite{Kleinberg2005} to an arbitrary GW branching process in which the number of children of a node is determined by a fixed offspring distribution. They observed that the 
{phase transition phenomenon}
at branching factor $b=2$ still exists in this general case. Furthermore, they also observed that when it requires to find the answer with high probability, e.g., with probability $1-\zeta-\frac{1}{n}$, where $\zeta$ is the extinction probability of the branching process, the phase transition phenomenon vanishes. In particular, for any branching process with branching factor $b>1$, to retrieve an answer with high probability, the required reward is $n^{\Theta(1)}$. They also showed that in a deterministic chain ($b=1$), the cost of the root is $\Omega(n!)$ for finding an answer with constant probability. 

Kota and Narahari~\citeyear{Kota2010} analyzed the reward for such fixed-payment contracts when the degree distribution follows power-law.
Dikshit and Yadati~\citeyear{Dikshit2009} considered the 
{quality of the answers} in query incentive networks. 
Both models exhibit similar 
phase transition phenomena
at 
branching factor $b=2$ found in~\cite{Arcaute2007,Kleinberg2005}.

\citeN{Cebrian2012} presented a split contract based mechanism motivated by the success of the winning team of the DARPA challenge. In this mechanism, the root provides a reward $r$ to the answer holder. A ``shortest path'' based answer selection scheme is adopted in case of multiple reachable answers. See Section~\ref{sec:answer selection}.
Each node in the path to the selected answer will receive a fraction of the reward received by its child.
With the split contract mechanism, it is shown that the 
{phase transition phenomenon}
vanishes at $b=2$.
In particular, for any GW branching process with branching factor $b>1$, the cost to retrieve an answer with constant probability is $O(\log n)$ in a Nash equilibrium which is asymptotically optimal. Therefore, split contract based query incentive networks are more efficient. On the other hand, if we want to retrieve an answer with high probability, 
split contracts also need reward of $n^{\Omega(1)}$. 

In the previous studies on the query incentive networks, generating fake identities is not part of the agents' strategy. In other words, sybil-proofness is not explicitly explored.
In fact, we show that the fixed contract based mechanisms are ``sybil-proof'' while split contract mechanisms are clearly not.


How to prevent sybil attacks~\cite{Douceur2002} has been studied in many aspects of computer networks. \citeN{Babaioff2012} presented a sybil-proof scheme for the Bitcoin system. There are several major differences between the Bitcoin system and a query incentive network. Most notably, instead of a branching process, the network in the Bitcoin system can be intentionally constructed, which gives the mechanism designer an additional freedom to address the sybil-proofness. Therefore, the results in~\cite{Babaioff2012} cannot be directly applied to a query incentive network. They adopted the iterated removal of dominated strategy, which is a stronger solution concept than the Nash equilibrium used in this paper.


Sybil-proofness mechanisms for multi-level marketing is studied in~\cite{Emek2011,Drucker2012}. In the multi-level marketing, there is a fixed cost (price) for a sybil to purchase the product. Therefore, to enforce sybil-proofness, the mechanisms try to cap the referral fees. 
Douceur and Moscibroda~\cite{Douceur2007} gave a sybil-proof lottery tree mechanism for motivating people to install and run a distributed service in a peer-to-peer system. 
The issue of sybil-attacks have also appeared in many other contexts such as reputation mechanisms~\cite{Cheng2005}, combinatorial auctions~\cite{Todo2009}, social choice~\cite{Wagman2008,Conitzer2010}, and cost-sharing games~\cite{Penna2009}. One major difference between our problem (as well as~\citeN{Babaioff2012}) with these problems is they are dealing with static configurations, which make sybil-proofness hard to achieve. In our results, we punish sybils by reducing their probability of winning in a probabilistic environment. In particular, our mechanism is not sybil-proof if the agents know the outcome of the environment. Such trade-off for sybils, i.e., more reward conditioned on winning with smaller winning probability,  is not available with static inputs. Nevertheless, we believe the results in this paper may be of interest in other settings.

\section{Query incentive networks: a normal form}

In this section, we describe the problem and the model formally. In
particular, we provide formal definition for the random branching
process for information propagation and the various components that
constitute an incentive mechanism.



\subsection{The branching process}

Following previous works, the underlying network is generated by a
Galton-Watson (GW) branching process on an infinite $d$-ary tree. In the branching process, each node $v$ samples its number of
children $C(v)$ independently according to a given distribution $D$.
Afterwards, $v$ selects $C(v)$ children,  out of
its $d$ children uniformly at random, and connects to them. The final tree $T_r$ in the branching process is the connected component
containing the root. We call an agent $u$ {\em active} if $u\in T_r$
and non-active otherwise.


Formally, given an offspring distribution $D=\{c_i\}_{i=0}^d$, where $\sum_{i=0}^dc_i=1$ and $c_i \ge 0$ is the probability to have $i$ children, define the probability generating function of the branching process as
\begin{equation}\label{eq:gefun}
\Psi(x)=\sum_{i=0}^dc_ix^i.
\end{equation}

A basic parameter for the branching process is the branching factor $b=\sum_{i=0}^d i c_i$,
defined as the expectation of $D$.
The extinction probability $\zeta$ of the branching process is the
probability that the branching process dies out, i.e., the final tree
$T_r$ is finite. A well known fact is that $\zeta=1$ if and only if
$b<1$ or $b=1$ with $c_0>0$, and $0\leq
\zeta<1$ otherwise.


For the query issued from the root, each node in the underlying tree has an answer with an independent probability $p=1/n$, where $n$ represents the
rarity of the answer.
So in expectation, we should reach out $\Theta(n)$ nodes to obtain an
answer with constant probability.

\subsection{Query model}


Given the above process for generating the network, the query process works as follows:
\begin{enumerate}
\item The root announces the incentive mechanism, which stipulates rules for selecting the answer and for rewarding the involved agents.
\item The query is propagated, starting from the root, down the tree as generated by the above branching process.
\item Each node (agent) in the tree, when receiving the query, may decide whether to continue to propagate the query, and whether,  in case it has an answer,  to report the answer back to its parent,
\item When an agent has an answer and/or receives reports of answers from its subtree, it chooses to report a subset (can be empty) of the answers to its parent. After the root receives all reports, a winning answer is selected.
\item The holder of the winning answer forwards the answer to the root through the path in the tree.  If any node on the path decides not to forward, no payment will be made. 
\item Once the root receives the answer, the rewards are paid to the nodes according to the rule announced in Step (1).
\end{enumerate}

One significant difference between our model and the previous work is
that in our model the incentive mechanism is determined by the root
but the previous work allows a mixture of global and local contracts.
For example, in the fixed-payment contract or the split contract
schemes, the type of contracts as well as the process for selecting an
answer are fixed globally, but each node can decide locally how to
enter into a contract with its parent or children.




In our model, the incentive mechanism announced by the root contains a
global reward allocation scheme, which maps any final configuration to a set of rewards
to the agents.
While such a global reward scheme may seem limiting as it takes away
the freedom enjoyed by the nodes in the fixed-payment contract and
split contract mechanisms, it is still non-trivial to address the new challenge
of sybil attacks. On the other hand, we show later in
this section that it is possible to describe the equilibrium of a local contract-based
scheme by a global reward allocation scheme.

%
%

\subsection{Query incentive mechanism}

We call an answer (or the agent who has the answer) {\em reachable}
from the root if all the agents on the path from the root to the answer
are active in the branching
process and decide to propagate the query. The {\em query incentive mechanism} determines how an answer
is selected when there are multiple reachable answers, and how the
rewards are allocated.  To prevent arbitrarily complex mechanisms, we
focus on mechanisms that satisfy the following properties.
\begin{enumerate}
\item \emph{Complete}: If there exist reachable answers, the mechanism will select one.
\item \emph{Unique}: Only one answer is selected if multiple answers are presented.
\item \emph{Anonymous}: Agents are not treated preferentially a prior.
\end{enumerate}

Such properties have been implicit in the previous work.  In addition,
by requiring these properties, it makes the analysis easier. For
example, the anonymity is helpful in addressing sybil-proofness since the identities
of the agents can be manipulated.
We
now describe families of mechanisms that achieve the above properties.
We divide the query incentive mechanism into two steps: the answer
selection step and the reward allocation step.

\subsubsection{Answer selection}
\label{sec:answer selection}

The answer selection step chooses one answer when multiple answers are
reachable from the root. We consider two answer selection schemes,
both appeared in the literature, that satisfy the above properties.
\hide{\MP{Li: I commented out the footnote, which seems out of place.}}
\hide{\footnote{In previous work,
\citeN{Cebrian2012} assumed that the answer selection scheme should
not affect the strategies of the agents. Both RW and SP answer
selection schemes satisfy this constraint as observed
in~\cite{Cebrian2012}. However, it is not clear how to mathematically
characterize the set of answer selection schemes which satisfy such
assumption.}}

\begin{itemize}
 \item Random Walk (RW): In RW scheme, starting from the root, at each
 step, we select one child uniformaly at random from those children
 who have reported the existence of answers in its subtree. We
 continue the random walk until we reach an answer, which is selected.

 \item Shortest Path (SP): Among all the reachable answers, we exclude
 those that are not the closest to the root. We then perform the
 above RW process for the remaining answers.
\end{itemize}

Both schemes have very natural interpretation. In the RW scheme,
\emph{after} the formation of the network, each node that has an
answer reports its answer to its parent.  If a node, who does not have
an answer itself, receives reports of multiple answers from all children, it
randomly selects one and reports it to its parent. In particular, if one node has an answer, it will not report answers from its subtree. The process continues until the root selects one of the answers reported to it uniform at random.

The SP scheme on the other hand can be viewed as the RW scheme for
{\em impatient agents}. One node reports back to its parent as soon as
it receives the query, in the case it has an answer. When the node
does not have an answer, it immediately reports back when it receives
reports of answers. In case of multiple answers reported
simultaneously, it will select one uniform at random. However, unlike
RW scheme, the node will not wait the responses from all children.  We
will use the SP answer selection scheme in our mechanisms.\footnote{
The reporting strategy described above is only for interpretation. In a Nash equilibrium, we assume
an agent will report all answers it is aware of. The tie-breaking is handled by the root after receiving all reports.
}


\subsubsection{Reward allocation}

Once an answer is selected, the reward allocation step determines how
the rewards are assigned to the nodes on the {\em answer path} $P$, i.e.,
the path from the root to the selected answer. To achieve anonymity,
we require the reward allocation scheme to be {\em oblivious}.

\begin{definition}{\bf Oblivious Reward Allocation Scheme.}
A reward allocation scheme maps any particular answer path $P$ in a resulting branching process $T_r$ to a set of payments to the agents in $P$:
\[ f: T_r, \, P\in T_r \rightarrow [1,\infty]^{|P|}. \]
The reward scheme $f$ is {\em oblivious} if $\forall P\in T_r, P'\in T_r'$  such that $|P|=|P'|$, we have
\[f(T_r, P) = f(T_r', P').\]
For a node $u\in P$, we denote its reward as $f_u(P)$.
\end{definition}
In particular, an {\em oblivious} reward allocation scheme only cares about the length of the answer
path. It does not consider the identities of the agents
in the path and the structure of the resulting branching process.
Although this is a restriction on the space of reward allocations,
oblivious reward allocation schemes are convenient in case of sybil-attacks,
since both the identities and the structure of the branching process
can be manipulated unexpected. In fact, we will show that the equilibria studied in the fixed-payment contract and split contract based query incentive networks imply {\em oblivious} reward allocation schemes.


{\bf Remark:} In our model, we assume there is no cost for the agents. To avoid trivial reward allocations, we require
all rewards are normalized to be at least one. This is slightly different with previous literature on query
incentive networks where a fixed unit cost is assumed in forwarding
the final selected answer. We choose the reward normalization instead of the forwarding cost to avoid defining the
answer forwarding cost for sybils. Nevertheless, it is
straightforward to generalize our results to the unit forwarding cost case as
long as the cost for sybils is properly defined.

For a query incentive mechanism with an oblivious reward allocation scheme, the \emph{expected cost} of the
mechanism is defined as
\[
\E_{P}[ \sum_{u\in P} f_u(P)  ],
\]
where $P$ is the answer path selected and $f(\cdot)$ is the reward allocation
scheme. If there is no reachable answer, $P$ is empty. The expectation
is taken over the randomness of the branching process, the answer
distribution and the answer selection scheme.

\subsection{Sybil attack}

Once the incentive mechanism is announced, an agent which is reachable
from the root can choose the action to maximize its (expected) payoff.
The agent can choose to propagate the
query or not. When it has answers either from itself or reported from its children, it can choose to report
a (possibly empty) subset of them to its parent.
One important
action we consider is the creation of fake identities (or sybil
attack).  We allow the following type of sybil attacks.


\begin{itemize}
\item {\bf Tree augmentation with sybils.} One agent is allowed to generate a possibly infinite tree of sybils attaching to its parent. Its original children can be attached to {\em one particular sybil} in the tree.\footnote{It would be interesting to consider the case that different children can be attached to different sybils in the tree. Our analysis unfortunately cannot handle this case.}
 The reward of the agent is the total rewards that are received by all its sybils.
\item {\bf Answer placement.} If the agent have an answer, it is allowed to place its answer to any subset of the sybils.
\item {\bf Decision timeframe.} We always assume an agent knows whether it has an answer or not before its strategic decision. We can also assume the action taken by the agent is conditioned on the event that it is {\em active} in the branching process.
\end{itemize}

With the above definition of sybil-attack, we call a query incentive
mechanism is {\em sybil-proof} if the strategy profile in which no
agent generates any sybil is a Nash equilibrium.
\begin{definition} [Sybil-proof query incentive mechanisms]
A query incentive mechanism, consisting of an answer selection scheme and a reward allocation scheme, is sybil-proof with level $h$ if the following strategy profile is a Nash equilibrium:
\begin{itemize}
\item All agents in the underlying $d$-aray tree in level $1$ to $h-1$:  If the agent contains an answer, it directly reports back to its parent and does not propagate the query. Otherwise, the agent chooses to propagate the query to its children. After it receives reports of answers from the children, it reports all answers to its parent. It chooses to forward the answer to its parent if one answer is selected from its subtree.
\item The agents at level $h$: If the agent has an answer, it directly reports back to its parent. If the answer is selected, it chooses to forward the answer to its parent.
The agent will not propagate the query further.
\end{itemize}
\end{definition}

Although we assume the decision of the agent is conditioned on the
event that it is active in the branching process, such condition is in
fact not necessary, i.e., the agent should take the same strategy
independent of whether it is active. This can be easily seen since if the agent is not active, all strategies will have utility zero. In what follows, we will analyze our mechanisms without conditioning on the activeness of the agents.

\hide{Let $u$'s reward be $g(u, A_u)$
where $A_u$ is $u$'s action.  To see this, if agent $u$'s best action
is $A_u$ if $u$ is active in the branching process, $u$ should choose
$A_u$ regardless its activeness. In particular, since when $u$ is not
active, its reward is zero. We have
\begin{align*}
\forall A_u', \E[ g(u, A_u)\, |\, u \mbox{ is active }  ] > \E[ g(u, A_u')\, |\, u \mbox{ is active }  ] \\
\implies \forall A_u', \E[ g(u, A_u)] > \E[ g(u, A_u')] \\
\end{align*}
Therefore, regardless the activeness of $u$ in the branching process, playing $A_u$ will be its best strategy.
In what follows, we will discuss agents' strategies without conditioning on the branching process.

Because our answer selection scheme will only select one {\em answer path} and only cares about the length of the path, it is sufficient to consider sybils formed by a chain, e.g., the agent generates a chain of sybils between its parent and its children. In fact, if the agent does not have an answer, its only chance to be rewarded is its real subtree has an answer and all sybils other than the chain from its parent to the subtree do not help.
On the other hand, if the agent have an answer, by our answer selection schemes, the agent only needs to generate an appropriate length of sybils and place the answer at the end of it. So  it is sufficient for us to only consider sybil-attacks with chains.

An agent's utility is its expected reward collected from itself or its sybils, where the expectation
is taken among the random events forming the branching process and generating answers on tree nodes.}


%
%

\subsection{Formulating contract-based mechanisms by global reward allocation schemes} \label{sec:connection}

To illustrate the connection between the global reward allocation scheme studied
in the paper and the previous work, we describe how one would
formulate an equilibrium of a contract-based mechanism as a global reward allocation scheme. We will mainly discuss the case for the fixed-payment contracts. The case for the split contracts is similar.

In a fixed-payment contract query incentive network,
the root provides a total reward $r_0$ for an answer of the query. In particular, the root enters a contract with its child $u$, i.e., if one answer from $u$ is selected, the root will pay $r_0$ to $u$. Then query is propagated down the tree, during which each node $v$ determines the reward to its children.
Finally, if there are multiple answers reported, the root will select an answer using the RW answer selection scheme. The agents on the path from the root to the selected answer holder are offered the reward based on the fixed-payment contracts determined. In particular, the utility of an agent $v$ in the path is $r_v^p - r_v^c - 1$, where $r_v^p$ is the reward offered by its parent and $r_v^c$ is the reward $v$ offered to its children. If $v$ is holding the selected answer, $r_v^c = 0$. Notice that all agents in the path except the root will pay a unit cost to forward the answer.



The strategy of each participating node $v$ is a function $f_v(\cdot)$, i.e., if the offer from its parent $u$ is $r_u$, it will offer a reward of $r_v = f_v(r_u)$ to its children. Then an equilibrium in the query incentive network is defined by the set of functions $\{f_v\}$ for all participating nodes. (For simplicity, we omit the discussion on the strategy that the nodes decide to participate in this section.)

For any given equilibrium in the fixed-payment contract query incentive network, we can construct a reward allocation scheme as follows. Let $P=\{v_1,v_2,\ldots,v_{\ell} \}$ be the sequence of nodes in the path from the root (excluded) to the selected answer holder. We will reward node $v_i$ by amount of $f_{v_{i-1}}(f_{v_{i-2}}(\cdots f_{v_{1}}(r)))$, where $r$ is the total reward offered by the root. (It $\ell =1$, the first node receives a reward of $r$.) Clearly, this is a valid  reward allocation scheme. Furthermore, if the equilibrium is symmetric, i.e., all nodes in the $i$-th level play the same strategy $f^i(\cdot)$, the corresponding reward allocation scheme is in fact oblivious.

For an equilibrium in the split contract query incentive network, we can construct such a reward allocation scheme as well.
However, it is clear that such an equilibrium does not lead
to a sybil-proof query incentive mechanism.  Consider
the case with a single chain with branching factor $b=1$.  The first
agent who has the answer can create a sybil and sign a split contract with ratio $0$ with it. In this way, the agent gets all the initial reward, which is much more than its fair share in the equilibrium.

For query incentive networks with fixed-payment contracts,
it is shown that there exits a {\em best-interest} Nash equilibrium~\cite{Cebrian2012} 
with a unique strategy function $f(\cdot)$ for all participating agents~\cite{Kleinberg2005}.
Notice that if one agent has an answer, it
will report truthfully and pocket all the reward offered because
creating sybils will not increase the total reward in this case while
it could reduce the probability that its answer is
selected in either RW or SP schemes. In fact, we have the following
result.


\begin{theorem}
\label{thm:fixedcontract}
The {\em best-interest} Nash equilibrium in the fixed-payment contract query incentive network with the RW answer selection scheme
defines a sybil-proof query incentive mechanism.
\end{theorem}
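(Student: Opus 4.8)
The plan is to show two things about the best-interest Nash equilibrium of the fixed-payment contract mechanism (with RW answer selection): first, that its induced global reward allocation scheme is well-defined, oblivious, and satisfies the completeness/uniqueness/anonymity requirements; and second, that no active agent can strictly increase its expected payoff by mounting a sybil attack of the type allowed (tree augmentation, answer placement, decision timeframe). The first part is essentially bookkeeping: by~\cite{Kleinberg2005} the equilibrium is symmetric with a single strategy function $f(\cdot)$ depending only on level, so the construction in Section~\ref{sec:connection} yields an oblivious scheme; the RW selection scheme is complete, unique, and anonymous by inspection. So the real content is sybil-proofness.

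For the sybil-proofness part I would split on whether the deviating agent $u$ holds an answer. If $u$ holds an answer: since the fixed-payment contract passes a fixed total reward down the chain, creating sybils between $u$ and the answer only lengthens the answer path, which (i) cannot increase the reward that flows back to $u$'s collection of sybils — the reward is determined by applying more contraction steps $f(\cdot)$, each of which is a contraction since a node keeps a positive margin — and (ii) strictly decreases the probability that $u$'s answer is the one selected under RW, because inserting sybil layers above the answer cannot help it win the random walk and placing the answer in multiple sybils only splits $u$'s own reachable mass. Hence the honest action (report truthfully, pocket the whole offer) weakly dominates. If $u$ does not hold an answer: in equilibrium $u$'s only route to positive payoff is that some answer in its real subtree is selected and $u$ sits on the answer path, collecting the margin $r_u^p - r_u^c - 1$ at its own level. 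A sybil chain inserted between $u$'s parent and $u$'s real subtree adds extra nodes on the path, each of whom, under the equilibrium strategy $f(\cdot)$, insists on a positive margin and a forwarding cost; so the reward reaching $u$'s ``real'' position is strictly smaller than without the sybils, while the winning probability of that subtree's answer is not increased (RW is oblivious to how many degree-one sybils were inserted). Summing over $u$'s sybils telescopes to exactly the margin $u$ would have gotten honestly minus the positive amounts skimmed by the fake nodes, so the deviation is not profitable. One should also check the degenerate configurations — $u$'s subtree empty, multiple competing answers — but in all of these the honest payoff is an upper bound by the same contraction/monotonicity argument, and ties are broken uniformly by the root, preserving anonymity.

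The main obstacle I expect is handling the full generality of the allowed sybil attack cleanly: the attacker may build an \emph{infinite} tree of sybils, not just a chain, and may attach its real children to an arbitrary sybil while also scattering its answer across many sybils. The reduction sketched informally in the excerpt — ``it suffices to consider chains'' — needs to be made rigorous: because RW selects a single answer path and the oblivious reward depends only on path length, any non-chain sybil structure either contributes nodes that are never on the selected path (hence receive $0$) or can be replaced by the chain from $u$'s parent to the relevant subtree without lowering $u$'s payoff; and if $u$ has an answer, the best answer placement is a single chain of optimal length with the answer at its end. Once this reduction is in place, the contraction property of $f(\cdot)$ (every forwarding node keeps margin $\ge 1$) together with monotonicity of the RW winning probability in the number of inserted degree-one nodes finishes the argument. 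The second delicate point is that we only need a Nash equilibrium, not a dominant-strategy notion, so it suffices to compare $u$'s sybil deviation against the honest profile in which all \emph{other} agents play honestly — this is exactly the setting where the telescoping-margin computation is cleanest, and it is why the weaker solution concept is adequate here.
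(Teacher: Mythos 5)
Your high-level setup (splitting on whether the deviator holds an answer, dispatching the answer-holder case by noting sybils cannot raise the total offer while they can only hurt under RW, and reducing arbitrary sybil trees to chains) matches the paper. The gap is in the core case of an agent $u$ that does \emph{not} hold an answer: your ``telescoping margin'' computation has the sign backwards. In the model, \emph{all} rewards received by $u$'s sybils accrue to $u$ itself --- they are not skimmed away by independent agents. If $u$ at level $i$ inserts $k$ sybils, the collective of $u$ and its sybils collects $r_{i-1}-r_{i+k}$ conditioned on being on the answer path, where $r_j=f^{(j)}(r_0)$; since each application of $f$ leaves a positive margin, this is \emph{strictly larger} than the honest margin $r_{i-1}-r_i$, not smaller. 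So the deviation trades a higher conditional reward against a (weakly) lower probability of lying on the answer path, and no amount of monotonicity alone tells you which effect wins; your argument as written would ``prove'' sybil-proofness even of the split contract, which is false.

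The paper closes this case with a reduction you are missing: a profitable sybil deviation in the normal form would translate into a profitable \emph{non-sybil} deviation in the original fixed-payment contract game, namely offering one's real children $f^{(k+1)}(r_{i-1})$ instead of $f(r_{i-1})$. In both scenarios the real descendants react as if they sat $k$ levels lower, and RW is unaffected by the inserted degree-one sybils, so the on-path probabilities coincide; letting $p_1,p_2$ be the on-path probabilities without and with the attack, the hypothesized profitability $p_2(r_{i-1}-r_{i+k})>p_1(r_{i-1}-r_i)$ together with $p_2\le p_1$ (to absorb the unit forwarding cost) yields $p_2(r_{i-1}-r_{i+k}-1)>p_1(r_{i-1}-r_i-1)$, contradicting that $f$ is a best response in the contract game. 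Without invoking the equilibrium property of $f$ through this correspondence, the non-answer case does not close.
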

\begin{proof}
Let the best interest Nash equilibrium in the fixed-payment contracts be $\{f(\cdot)\}$ with initial
reward $r_0$. We have shown that if one agent has the answer, it has
no incentive to attack with sybils.
It is sufficient to consider agents that do not have answers.
Now assume the reward allocation scheme
defined by $f(\cdot)$ is not sybil-proof, i.e., one agent at level $i$
has incentive to attack with $k$ {\em additional} sybils between itself and its children in the normal
form.

Correspondingly, consider the original game with fixed-payment contracts. We
show that the same agent at level $i$ would benefit by not following
$f(\cdot)$ in the equilibrium. In particular, in receiving offer
$r_{i-1}$, the agent will offer its children $r_{i+k}=f^{(k+1)}(r_{i-1})$
instead of $r_i = f(r_{i-1})$, where $f^{(k+1)}(\cdot)$ is the function to iteratively apply $f(\cdot)$ for $k+1$ times.

In both cases, i.e., the normal form with $k$ sybils and the fixed-payment
contract game with offer $f^{(k+1)}(r_{i-1})$, all the (real) descendants
will react as if they are lower in the tree.
Notice that the two trees in the two cases are slightly different. However, the RW answer selection scheme will not be impacted by the sybils.
Therefore,
{ for the attacking agent, the probabilities that it is on the answer path are the same in both cases.}

Let $p_1$ and $p_2$ be the probability that the agent is on the answer path if
it does not attack and does attack with $k$ sybils respectively. With
answer selection scheme RW, we have $p_1 \geq p_2$.
By the assumption of the non-sybil-proofness in the normal form without forwarding cost, we have
\begin{equation}
\label{eqn:sybil-proof-normal-form}
 p_2\cdot (r_{i-1} -r_{i+k}) > p_1\cdot (r_{i-1} - r_{i}).
\end{equation}

Notice that in the fixed-payment contract game, the expected utility if this agent plays truthfully is $p_1\cdot (r_{i-1} - r_{i} -1)$. The expected utility if this agent plays $r_{i+k}$ is $p_2\cdot(r_{i-1} - r_{i+k} -1)$. Since $p_2 \leq p_1$,
by Eqn.~(\ref{eqn:sybil-proof-normal-form}),
we have
\[ p_2\cdot (r_{i-1} -r_{i+k} -1) >
p_1 \cdot (r_{i-1} - r_{i}) - p_2 \ge
p_1 \cdot (r_{i-1} - r_{i} -1).\]
In other words, $f(\cdot)$ is not an equilibrium of the game with fixed-payment contracts. It is a contraction.
\end{proof}

{\bf Remark:} We showed the sybil-proofness with the RW answer selection scheme. Our proof does not directly apply to the SP answer selection scheme, which we leave as an interesting problem.

Although fixed-payment contracts already imply a sybil-proof query
incentive mechanism, it is not cost-effective for the case of
$b<2$~\cite{Kleinberg2005} or if one wants to find an answer with
probability $1-\zeta-\frac{1}{n}$~\cite{Arcaute2007}.  In the rest of
the paper, we will propose a new mechanism that is more
cost-effective in these two cases.

\section{Technical preparations}

One main complexity of our analysis comes from the branching
process. In this section, we summarize and develop some technical
results regarding the branching process which will be needed in the
analysis.

Define $\phi_i$ as the probability that there is no answer in the
nodes of the first $i$ levels of the branching process.  Then the
probability that the first answer in our branching process is at level
$i\ge 1$ is $\lambda_i=\phi_{i-1} - \phi_{i}$, with $\phi_0=1$.

Some crucial properties of the sequence of $\{\lambda_i\}$ are
summarized in Lemma~\ref{lem:lambda_charaterization}.  One essential
property is that $\{\lambda_i\}$ is single-peaked: it first increases
approximately geometrically with a constant ratio.  Then it stays at
nearly maximum value for a constant number of levels until it starts
to decrease geometrically. This property might be of independent
interest.

\begin{lemma}
\label{lem:lambda_charaterization}
Consider a branching process with branching factor $b>1$. There exist levels $1\leq \ell^1 \leq \ell^*$ such that
\begin{enumerate}
\item $\{\lambda_i\}$ is a single-peaked sequence, peaking at level $\ell^*$, i.e., $\forall i\le \ell^*$, $\lambda_{i-1} \leq \lambda_{i}$ and $\forall j \ge \ell^*$, $\lambda_{j} > \lambda_{j+1}$.
\item There exists constant $\rho >1$, such that $\forall i <\ell^1$, $\lambda_{i+1} \ge \rho \cdot \lambda_{i}$.
\item $\lambda_{\ell^1} = \Omega(1)$ and consequently $\ell^*-\ell^1 = O(1)$.
\item There exists constant $\gamma >1$, such that $\forall i \ge \ell^*+2$, $\sum_{j\ge i} \lambda_j \leq \gamma  \cdot \lambda_i$.
\end{enumerate}
\end{lemma}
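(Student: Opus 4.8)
The plan is to work with the generating-function recursion for $\phi_i$. Let $q=1/n$ be the answer probability, so that $1-q$ is the probability a node has no answer. If $\Psi$ is the offspring generating function, then the probability $\phi_i$ that the first $i$ levels contain no answer satisfies $\phi_0 = 1$ (in our indexing the root is level $0$ and has no effect, or we can set $\phi_0=1-q$; either way a constant shift) and the recursion
\[
\phi_{i+1} = (1-q)\,\Psi(\phi_i),
\]
since a tree of depth $i+1$ has no answer iff the root has no answer and each of its subtrees (of depth $i$) has no answer. We have $\phi_i \downarrow \phi_\infty$, where $\phi_\infty$ is the smallest fixed point of $x = (1-q)\Psi(x)$ in $[0,1]$; because $b>1$ and $q>0$ this fixed point is strictly less than $1$ (and is close to $\zeta$ when $q$ is small, but we do not need that here). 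Then $\lambda_i = \phi_{i-1}-\phi_i \ge 0$, and $\sum_i \lambda_i = \phi_0 - \phi_\infty > 0$, so $\{\lambda_i\}$ is a nonnegative summable sequence. First I would record the telescoping identity and the monotone convergence of $\phi_i$ as the backbone of everything.

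For the single-peakedness (part 1), I would subtract consecutive copies of the recursion:
\[
\lambda_{i+1} = \phi_i - \phi_{i+1} = (1-q)\bigl(\Psi(\phi_{i-1}) - \Psi(\phi_i)\bigr) = (1-q)\,\Psi'(\xi_i)\,\lambda_i
\]
for some $\xi_i \in [\phi_i,\phi_{i-1}]$ by the mean value theorem, using that $\Psi$ is increasing and convex on $[0,1]$. So the ratio $\lambda_{i+1}/\lambda_i = (1-q)\Psi'(\xi_i)$, and since $\phi_i$ is decreasing, $\xi_i$ is (essentially) decreasing and $\Psi'$ is increasing, the ratio $(1-q)\Psi'(\xi_i)$ is nonincreasing in $i$. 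A sequence whose consecutive ratios are nonincreasing is single-peaked: once the ratio drops below $1$ it stays below $1$, so the sequence increases then decreases. This gives part 1 with $\ell^*$ the last index where the ratio is $\ge 1$. For part 2, I would show that near the start $\phi_i$ is still close to $1$ (it takes $\Theta(\log n)$ levels for the no-answer probability to decay appreciably, since each level multiplies the "answer-free mass" by roughly $(1-q)$ times something close to $1$), hence $\Psi'(\xi_i)$ is close to $\Psi'(1^-) = b > 1$; choosing $\rho$ to be any constant in $(1,b)$ and $\ell^1$ the last level where $\xi_i$ is large enough that $(1-q)\Psi'(\xi_i)\ge\rho$ gives the geometric growth. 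Part 4 is the symmetric statement near the tail: once $i \ge \ell^*+2$ the ratios $(1-q)\Psi'(\xi_i)$ are bounded above by some constant $\kappa < 1$ (strictly, because at the peak the ratio is already $\le 1$ and it is strictly decreasing, so two steps past it is strictly below $1$ by a fixed amount depending only on $\Psi$ and $b$), so $\lambda_j \le \kappa^{\,j-i}\lambda_i$ for $j\ge i$ and hence $\sum_{j\ge i}\lambda_j \le \lambda_i/(1-\kappa) =: \gamma\lambda_i$.

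Part 3 — that $\lambda_{\ell^1} = \Omega(1)$ and therefore $\ell^* - \ell^1 = O(1)$ — is the step I expect to be the main obstacle, because it requires quantitative control of $\phi_i$ rather than just monotonicity. The idea is: by definition of $\ell^1$, up through level $\ell^1$ the value $\phi_i$ is still bounded away from $\phi_\infty$ by a constant (equivalently $\xi_i$ is large enough to keep the ratio $\ge\rho$), while $\sum_{i\le \ell^1}\lambda_i = \phi_0 - \phi_{\ell^1}$ is at least a constant; combined with part 2, $\phi_0 - \phi_{\ell^1} = \sum_{i\le\ell^1}\lambda_i \le \lambda_{\ell^1}\sum_{k\ge 0}\rho^{-k} = \lambda_{\ell^1}\cdot\frac{\rho}{\rho-1}$, which forces $\lambda_{\ell^1} \ge \frac{\rho-1}{\rho}(\phi_0-\phi_{\ell^1}) = \Omega(1)$. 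Once $\lambda_{\ell^1}=\Omega(1)$, the "plateau" length $\ell^*-\ell^1$ is $O(1)$ because from $\ell^1$ onward the sequence is still nonnegative and summable with total mass at most $\phi_0-\phi_\infty = O(1)$, yet each of $\lambda_{\ell^1},\dots,\lambda_{\ell^*}$ is $\ge \lambda_{\ell^1} = \Omega(1)$ (they are nondecreasing up to $\ell^*$), so there can be only $O(1)$ of them. The delicate point is pinning down the constant lower bound on $\phi_0 - \phi_{\ell^1}$ and the constant separating $\xi_{\ell^1}$ from the fixed point; this is where one must use that $\Psi$ is a fixed polynomial with $\Psi'(1)=b>1$, $\Psi(1)=1$, and that $q$ is at most a small constant $p_b$, so the fixed point $\phi_\infty$ and the crossover level are all controlled by $b$ and $\Psi$ alone. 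I would handle this by an explicit case analysis comparing $\phi_i$ to the linearized recursion $\psi_{i+1} = (1-q)(1 - b(1-\psi_i))$ near $1$ to locate $\ell^1 = \Theta(\log n)$, and a continuity/compactness argument for the fixed point.
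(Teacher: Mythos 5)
Your setup (the recursion $\phi_{i}=t(\phi_{i-1})$ with $t(x)=\Psi\bigl((1-\tfrac1n)x\bigr)$ up to an indexing convention, the mean-value-theorem identity $\lambda_{i+1}=t'(\xi_i)\lambda_i$ with $\xi_i\in[\phi_i,\phi_{i-1}]$, and the monotonicity of $t'$ giving single-peakedness) is exactly the paper's argument for parts (1) and (2). Your part (3) is a genuinely different and more self-contained route: the paper invokes an external lemma of Cebri\'an et al.\ bounding $(1-\phi_i)/\lambda_{i+1}$, whereas you derive $\lambda_{\ell^1}=\Omega(1)$ from the geometric domination $\sum_{i\le\ell^1}\lambda_i\le\frac{\rho}{\rho-1}\lambda_{\ell^1}$ of part (2). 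This works, provided you are careful with the index at which the constant drop in $\phi$ is certified: by definition of $\ell^1$ you only know $\phi_{\ell^1}\ge 1-\epsilon$, so $\phi_0-\phi_{\ell^1}\le\epsilon$ a priori; the constant \emph{lower} bound comes from $\phi_{\ell^1+1}$ (or $\phi_{\ell^1+2}$) having crossed the threshold, and you then need $\lambda_{\ell^1+1}\le b\,\lambda_{\ell^1}\le b(\phi_0-\phi_{\ell^1})$ to push the $\Omega(1)$ mass back onto $\lambda_{\ell^1}$ itself. You flag this as the delicate point, and the fix stays within your plan.

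The genuine gap is in part (4). You assert that because the ratio $\lambda_{i+1}/\lambda_i=(1-q)\Psi'(\xi_i)$ is at most $1$ at the peak and is strictly decreasing, ``two steps past it is strictly below $1$ by a fixed amount depending only on $\Psi$ and $b$.'' Strict monotonicity gives no quantitative separation: the per-step decrement of the ratio is controlled by $t''$ times $\lambda_{i}$, and $\lambda_{\ell^*+1}$ (and $\phi_{\ell^*+1}$) need not be $\Omega(1)$, so the ratio could a priori sit at $1-1/n$ for many levels past the peak --- this is exactly the scenario the paper singles out as the obstacle. The paper closes it with a two-case argument: either $\lambda_{\ell^*+2}/\lambda_{\ell^*+1}<1/2$ already (then $\gamma=2$ works outright), or else $\lambda_{\ell^*+2}\ge\lambda_{\ell^*}/4=\Omega(1)$, which forces $\phi_{\ell^*+1}\ge\lambda_{\ell^*+2}=\Omega(1)$ and hence $t''(\phi_{\ell^*+1})=\Omega(1)$ and $\lambda_{\ell^*+1}=\Omega(1)$, so one application of the mean value theorem to $t'$ yields $t'(\phi_{\ell^*+1})\le t'(\phi_{\ell^*})-\Omega(1)\le 1-\Omega(1)$. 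You need this (or an equivalent) extra idea; once the constant $\kappa<1$ is in hand, your geometric tail sum finishes part (4) as you describe.
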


Following the analysis in the literature, we fix the branching factor
$b$ as a constant but we allow $n$ to grow.

Define function
\[t(x) = \sum_{j=0}^dc_jx^j(1-\frac{1}{n})^j.\]
Notice that $\phi_i = t(\phi_{i-1}) = t(t(\phi_{i-2})) = t^{(i)}(1)$.
{This is because, if the root has $j$ children, then the event that there is no answer at the first
	$i$ levels (with probability $\phi_i$) is the same as the event that none of its $j$ children
	has the answer (with probability $(1-\frac{1}{n})^j$) and none of the $i-1$ level subtrees rooted at its children
	has the answer (with probability $\phi_{i-1}^j$).}
The following result studies the growth rate of $\lambda_i$.

\begin{proposition} For all $i\ge 1$,
\label{prop:lambda_ratio}
\begin{equation}
\frac{\lambda_{i+1}}{\lambda_i} \in [ t'(\phi_{i}), t'(\phi_{i-1})].
\end{equation}
\end{proposition}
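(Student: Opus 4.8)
The plan is to express the ratio as a divided difference of $t$ and then invoke the mean value theorem together with a monotonicity property of $t'$. First I would rewrite the numerator: since $\phi_{k}=t(\phi_{k-1})$ for all $k\ge 1$, we have
\[
\lambda_{i+1}=\phi_i-\phi_{i+1}=t(\phi_{i-1})-t(\phi_i),
\]
while $\lambda_i=\phi_{i-1}-\phi_i$ by definition. Hence
\[
\frac{\lambda_{i+1}}{\lambda_i}=\frac{t(\phi_{i-1})-t(\phi_i)}{\phi_{i-1}-\phi_i}.
\]
To make sense of this we need $\phi_{i-1}-\phi_i=\lambda_i>0$; this follows because $\{\phi_i\}$ is strictly decreasing. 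Indeed $\phi_1=t(1)=\sum_j c_j(1-1/n)^j<1=\phi_0$ (as $b>1$ forces $c_d>0$ for some $j\ge 1$), and $t$ is strictly increasing on $[0,1]$, so $\phi_i<\phi_{i-1}$ implies $\phi_{i+1}=t(\phi_i)<t(\phi_{i-1})=\phi_i$ by induction.

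Next, since $t$ is a polynomial (hence $C^\infty$) and $\phi_i<\phi_{i-1}$, the mean value theorem gives some $\xi\in(\phi_i,\phi_{i-1})$ with
\[
\frac{t(\phi_{i-1})-t(\phi_i)}{\phi_{i-1}-\phi_i}=t'(\xi).
\]
Finally I would observe that $t'$ is nondecreasing on $[0,1]$: writing $t(x)=\sum_{j=0}^d c_j(1-1/n)^j x^j$, the coefficients $c_j(1-1/n)^j$ are nonnegative, so $t'(x)=\sum_{j=1}^d j\,c_j(1-1/n)^j x^{j-1}$ also has nonnegative coefficients and is therefore nondecreasing on $[0,\infty)$. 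Combining, $t'(\phi_i)\le t'(\xi)\le t'(\phi_{i-1})$, which is exactly the claimed containment $\frac{\lambda_{i+1}}{\lambda_i}\in[t'(\phi_i),t'(\phi_{i-1})]$.

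The argument is essentially routine; there is no real obstacle. The only points requiring a word of care are (i) verifying that the denominator $\lambda_i$ is strictly positive, so that the ratio and the divided difference are well-defined, and (ii) recording that $t'$ inherits monotonicity from the nonnegativity of the offspring probabilities — both of which are immediate from the structure of $t$.
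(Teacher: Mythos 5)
Your proof is correct and follows essentially the same route as the paper's: write $\lambda_{i+1}=t(\phi_{i-1})-t(\phi_i)$, apply the mean value theorem, and use the monotonicity of $t'$. The extra checks you record (strict positivity of $\lambda_i$ and nonnegativity of the coefficients of $t'$) are fine but not a different argument.
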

\begin{proof}
Notice that $\phi(i) = t(\phi_{i-1})$. Therefore, $\lambda_{i+1} = \phi_{i} - \phi_{i+1} = t(\phi_{i-1}) - t(\phi_i).$ As $t(\cdot)$ is continuous, by mean value theorem, there exists $y\in [\phi_i, \phi_{i-1}]$ such that $\lambda_{i+1} = t'(y)(\phi_{i-1} -\phi_i) = t'(y)\cdot \lambda_{i}$. Finally, the result comes by the fact that $t'(x)$ is a monotonically increasing function for $x>0$.
\end{proof}

%
%


We  will also utilize the following two results from previous works.

\begin{lemma}[\cite{Arcaute2007}]
\label{le:t'x}
Given constant $\epsilon>0$, $\forall x\in [1-\epsilon, 1]$,  we have $t{'}(x)\in[(1-\frac{1}{n})\cdot b\cdot (1-5\epsilon d),(1-\frac{1}{n})\cdot b]$.
\end{lemma}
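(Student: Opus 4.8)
The plan is to turn the estimate into a statement about the offspring generating function and then apply Bernoulli-type inequalities. Since $t(x)=\sum_{j=0}^d c_j\big((1-\tfrac1n)x\big)^j=\Psi\big((1-\tfrac1n)x\big)$, differentiating gives
\begin{equation}
t'(x)=\Big(1-\tfrac1n\Big)\,\Psi'\!\Big(\big(1-\tfrac1n\big)x\Big)=\Big(1-\tfrac1n\Big)\sum_{j=1}^d j c_j\, z^{\,j-1},\qquad z:=\Big(1-\tfrac1n\Big)x.
\end{equation}
Because $\sum_{j=1}^d jc_j=b$, the lemma is equivalent to the two-sided bound $1-5\epsilon d\le \tfrac1b\sum_{j=1}^d jc_j z^{j-1}\le 1$ for all $x\in[1-\epsilon,1]$, so everything reduces to controlling the weighted average of the powers $z^{j-1}$.

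For the upper bound, note that $z=(1-\tfrac1n)x\le 1$ whenever $x\le 1$, hence $z^{j-1}\le 1$ for every $j\ge 1$ and $\sum_{j\ge 1}jc_jz^{j-1}\le\sum_{j\ge1}jc_j=b$; this already gives $t'(x)\le(1-\tfrac1n)b$. For the lower bound I would use the elementary factorization $1-z^{j-1}=(1-z)\big(1+z+\dots+z^{j-2}\big)\le (j-1)(1-z)\le d(1-z)$, valid for $z\in[0,1]$, which after summing against the weights $jc_j$ yields
\begin{equation}
\frac1b\sum_{j=1}^d jc_j z^{j-1}=1-\frac1b\sum_{j=1}^d jc_j\big(1-z^{j-1}\big)\ \ge\ 1-d(1-z).
\end{equation}
Finally, for $x\in[1-\epsilon,1]$ we have $1-z=(1-x)+\tfrac{x}{n}\le \epsilon+\tfrac1n$, so $t'(x)\ge(1-\tfrac1n)\,b\,\big(1-d\epsilon-\tfrac{d}{n}\big)$. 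The degree bound $j-1\le d$ is exactly what converts the per-term linear-in-$\epsilon$ estimates into one clean multiplicative factor.

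The only step that needs a remark rather than a computation is matching $1-d\epsilon-\tfrac dn$ with the stated $1-5\epsilon d$: in the regime of interest $\epsilon$ is a fixed constant while $n$ is allowed to grow, so $\tfrac1n\le 4\epsilon$ for all sufficiently large $n$, and then $1-d\epsilon-\tfrac dn\ge 1-5\epsilon d$, whence $t'(x)\ge(1-\tfrac1n)b(1-5\epsilon d)$ since $(1-\tfrac1n)b\ge 0$. In other words the constant $5$ is just slack absorbing the $O(1/n)$ correction introduced by the $(1-\tfrac1n)$ factor; any constant strictly larger than $1$ in front of $d$ would work for $n$ large enough. I do not anticipate a genuine obstacle here — the argument is a one-line generating-function rewrite followed by Bernoulli's inequality and the trivial bound $j-1\le d$ — so the write-up is essentially bookkeeping, with the $1/n$-versus-$\epsilon$ reconciliation being the only place to be slightly careful.
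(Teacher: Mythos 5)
The paper does not prove this lemma at all --- it is imported verbatim from Arcaute et al.\ \citeyear{Arcaute2007} as a black box --- so there is no internal proof to compare against; what you have written is a correct, self-contained derivation. The generating-function rewrite $t(x)=\Psi\big((1-\tfrac1n)x\big)$, the trivial upper bound from $z^{j-1}\le 1$, and the telescoping estimate $1-z^{j-1}\le (j-1)(1-z)\le d(1-z)$ followed by $1-z\le\epsilon+\tfrac1n$ are all sound, and this is essentially the same elementary computation one finds in the cited source. Your closing remark is the one point that deserves emphasis rather than apology: the inequality $1-d\epsilon-\tfrac dn\ge 1-5\epsilon d$ genuinely requires $\tfrac1n\le 4\epsilon$, and the lemma as literally stated can fail without some such condition (e.g.\ $d=2$, $c_2=1$, $n=2$, $\epsilon=10^{-2}$ gives $t'(1-\epsilon)\approx 0.99\cdot b\cdot 0.5$, well below $(1-\tfrac1n)b(1-10\epsilon)$). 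This is harmless here because the paper works in the regime where $\epsilon$ is a fixed constant and $n\to\infty$ (and in its application, Eqn.~(\ref{eqn:epsilon1}) already forces $b(1-5\epsilon d)>1$, hence $\epsilon$ bounded away from $0$ while $n$ is large), but it is an implicit hypothesis of the lemma, not mere slack in the constant $5$, and a careful write-up should state it.
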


\begin{lemma}[\cite{Cebrian2012}]
\label{lem:lambdai+1}
Consider any GW branching process with branching factor $b>1$. Then, for every $i$ such that $\zeta < \phi_i\leq1$, it holds that
\begin{equation}
\frac{1-\phi_i}{\lambda_{i+1}}\leq \max\{\frac{1}{b-1},\frac{1}{\phi_i - \zeta}\cdot\frac{1}{1-\Psi'(\zeta)}\},
\end{equation}
where $\Psi(x)$ is the generation function of the branching process and $\zeta$ is the extinction probability.
\end{lemma}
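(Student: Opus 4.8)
The plan is to strip away the thinning factor $(1-1/n)$ immediately and reduce everything to a convexity statement about $\Psi$ on $[\zeta,1]$. Since $\Psi$ is increasing and $(1-\tfrac1n)\phi_i\le\phi_i$, the recurrence $\phi_{i+1}=t(\phi_i)=\Psi((1-\tfrac1n)\phi_i)$ gives at once
\[
\lambda_{i+1}=\phi_i-\Psi\!\big((1-\tfrac1n)\phi_i\big)\ \ge\ \phi_i-\Psi(\phi_i),
\]
so $\frac{1-\phi_i}{\lambda_{i+1}}\le\frac{1-\phi_i}{\phi_i-\Psi(\phi_i)}$ and it suffices to bound the latter. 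Writing $x=\phi_i$ and $k=1-\Psi'(\zeta)$, I claim it is enough to prove the pointwise inequality
\[
x-\Psi(x)\ \ge\ k\,(1-x)(x-\zeta),\qquad x\in[\zeta,1],\qquad(\ast)
\]
since $(\ast)$ rearranges to $\frac{1-x}{x-\Psi(x)}\le\frac{1}{(x-\zeta)(1-\Psi'(\zeta))}$, which is one of the two terms in the stated maximum; thus $(\ast)$ alone yields the lemma (the $\frac{1}{b-1}$ term is a looser alternative that I will not need).

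To prove $(\ast)$, set $G(x)=\big(x-\Psi(x)\big)-k(1-x)(x-\zeta)$, so that $(\ast)$ reads $G\ge0$ on $[\zeta,1]$. Using $\Psi(\zeta)=\zeta$ and $\Psi(1)=1$ we have $G(\zeta)=G(1)=0$, and a direct computation gives the boundary slopes $G'(\zeta)=k\zeta\ge0$ and
\[
G'(1)=(1-\zeta)(1-\Psi'(\zeta))-(b-1).
\]
Moreover $G''(x)=2k-\Psi''(x)$ is \emph{decreasing} in $x$, because $\Psi''$ is increasing (all coefficients of $\Psi$ are nonnegative). Hence $G$ is convex on an initial subinterval of $[\zeta,1]$ and concave thereafter, with either part possibly empty.

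The crux, and the step I expect to be the main obstacle, is to show $G'(1)\le0$, that is $(1-\zeta)\big(1-\Psi'(\zeta)\big)\le b-1$. I would obtain the stronger bound $1-\Psi'(\zeta)\le b-1$ from the convexity of $\Psi'$ on $[\zeta,1]$, which holds because $\Psi'''\ge0$. A convex function lies below its chord, so
\[
1-\zeta=\Psi(1)-\Psi(\zeta)=\int_\zeta^1\Psi'(x)\,dx\ \le\ \frac{\Psi'(\zeta)+\Psi'(1)}{2}\,(1-\zeta);
\]
dividing by $1-\zeta>0$ gives $\Psi'(\zeta)+b\ge2$, i.e.\ $1-\Psi'(\zeta)\le b-1$, and multiplying by $1-\zeta\le1$ yields $G'(1)\le0$.

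With $G$ convex-then-concave, $G'(\zeta)\ge0$ and $G'(1)\le0$, the derivative $G'$ stays nonnegative on the convex part and crosses zero at most once (from $+$ to $-$) on the concave part; hence $G$ increases and then decreases on $[\zeta,1]$. Since $G$ vanishes at both endpoints, it is nonnegative throughout, establishing $(\ast)$ and the lemma. The only degeneracy is $\zeta=0$ (when $c_0=0$), where $G'(0)=0$; there I would instead note $G''(0)=2(1-c_1-c_2)\ge0$, so $G$ is convex and hence nonnegative near $0$ (with $G\equiv0$ in the purely quadratic case), after which the same monotonicity argument applies.
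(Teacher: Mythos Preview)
The paper does not supply a proof of this lemma; it is quoted from \cite{Cebrian2012} as a known technical input, so there is no in-paper argument to compare against. Your proof is self-contained and correct.

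The initial reduction $\lambda_{i+1}=\phi_i-\Psi((1-\tfrac1n)\phi_i)\ge \phi_i-\Psi(\phi_i)$ is valid since $\Psi$ is increasing, and it cleanly removes the thinning parameter. The heart of the argument, inequality $(\ast)$, is then a pure statement about $\Psi$ on $[\zeta,1]$; your analysis of $G$ via the sign pattern of $G''$ (decreasing, hence $G$ convex-then-concave) together with the endpoint slopes $G'(\zeta)=k\zeta\ge 0$ and $G'(1)\le 0$ forces $G$ to be unimodal with $G(\zeta)=G(1)=0$, whence $G\ge 0$. The trapezoid estimate for the convex function $\Psi'$, yielding $1-\Psi'(\zeta)\le b-1$ and hence $G'(1)\le 0$, is the right tool here.

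Two minor remarks. First, your separate handling of $\zeta=0$ is not strictly necessary: the main argument already uses only $G'(\zeta)\ge 0$, which holds when $\zeta=0$, and the convex-then-concave structure plus $G'(1)\le 0$ still gives $G\ge 0$. Second, you in fact establish the bound using only the second term of the maximum, so your proof yields the slight sharpening
\[
\frac{1-\phi_i}{\lambda_{i+1}}\ \le\ \frac{1}{(\phi_i-\zeta)\,(1-\Psi'(\zeta))},
\]
with the $\tfrac{1}{b-1}$ alternative never invoked.
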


Now we are ready to prove the main result in this section.

\begin{proof}[of Lemma~\ref{lem:lambda_charaterization}]
When $b>1$, notice that $\{\phi_i\}$ is a strictly decreasing sequence while $t'(x)$ is increasing for $x>0$. Proposition~\ref{prop:lambda_ratio} indicates that the ratio of $\lambda_{i+1}/\lambda_i$ is decreasing, which implies that $\{\lambda_i\}$ is a single-peaked sequence. This proves (1) of Lemma~\ref{lem:lambda_charaterization}.

Now we proceed to the second property.
Let $\epsilon \in (0,1)$ be some constant we will specify later. Define $\ell(\epsilon) = \max\{ i \,\,| \,\,\phi_i \ge 1-\epsilon \}$.

By Proposition~\ref{prop:lambda_ratio}, we have $\frac{\lambda_{i+1}}{\lambda_i} \ge t'(\phi_i)$.
Let
\begin{equation}
\label{eqn:epsilon1}
\rho = (1-\frac{1}{n})\cdot b\cdot (1-5\epsilon d) >1,
\end{equation}
which holds for sufficiently small $\epsilon$.
By Lemma~\ref{le:t'x} and the definition of $\ell(\epsilon)$, we have for any $1\leq i \leq \ell(\epsilon), \frac{\lambda_{i+1}}{\lambda_i}\ge t'(\phi_i)\ge \rho$.
Therefore, property (2) holds by setting $\ell^1 = \ell(\epsilon)-1$.


For the third property, we show that $\lambda_{\ell(\epsilon)+1}= \Omega(1)$
for some carefully chosen $\epsilon$.
Since $\frac{\lambda_{i+1}}{\lambda_{i}}\leq t'(1)\leq b$, $\lambda_{\ell^1} \geq \lambda_{\ell(\epsilon)+1}/b^2$.
Let $\zeta$ be the extinction probability of the branching process.
Assume
\begin{equation}
\label{eqn:epsilon2}
0< \epsilon < 1-\zeta.
\end{equation}
%
%
%
%
%
By definition of $\ell(\epsilon)$, we have $\phi_{\ell(\epsilon)}\geq 1-\epsilon$ and $\phi_{\ell(\epsilon)+1}<1-\epsilon$. Define $c(\epsilon) = \max\{\frac{1}{b-1},\frac{1}{1-\zeta-\epsilon}\cdot\frac{1}{1-\Psi'(\zeta)}\}$. Notice that $c(\epsilon)$ is non-decreasing for $\epsilon \in (0, 1-\zeta)$.
By Lemma~\ref{lem:lambdai+1}, we have

\begin{equation}
\begin{aligned}
\frac{1-\phi_{\ell(\epsilon)}}{\lambda_{\ell(\epsilon)+1}}&\leq 
c(1-\phi_{\ell(\epsilon)}) \leq c(\epsilon).
\end{aligned}
\end{equation}
Since $\lambda_{\ell(\epsilon)+1} =    \phi_{\ell(\epsilon)} - \phi_{\ell(\epsilon)+1}$, we have

\begin{align*}
\lambda_{\ell(\epsilon)+1} \ge \frac{1-\phi_{\ell(\epsilon)}}{c(\epsilon)}
= \frac{1-\phi_{\ell(\epsilon)+1} - \lambda_{\ell(\epsilon)+1}}{c(\epsilon)} > \frac{\epsilon -  \lambda_{\ell(\epsilon)+1} }{c(\epsilon)}.
\end{align*}
Therefore, $\lambda_{\ell(\epsilon)+1} > \frac{\epsilon}{c(\epsilon)+1} = \Omega(1)$, as both $\epsilon$ and $c(\epsilon)$ are constants independent of $n$.
In other words, for any constant $\epsilon$ satisfies Eqn.~(\ref{eqn:epsilon1}) and Eqn.~(\ref{eqn:epsilon2}), we can set $\ell^1 = \ell(\epsilon)-1$ and both property (2) and (3) hold. Since $\lambda_i$ is growing from $\ell^1$ to $\ell^*$, we must have $\ell^* -\ell^1 = O(1)$. (The total probability is at most $1$.)

Finally, we consider property (4) regarding the sequence of $\{\lambda_i\}$ after $\ell^*$.
By the definition of $\ell^*$ and Proposition~\ref{prop:lambda_ratio},
\begin{equation}
\label{eqn:t'}
 t'(\phi_{\ell^*+1}) \leq \frac{\lambda_{\ell^*+2}}{\lambda_{\ell^*+1}} \leq   t'(\phi_{\ell^*}) \leq  \frac{\lambda_{\ell^*+1}}{\lambda_{\ell^*}} < 1.
\end{equation}
If we can assume $t'(\phi_{\ell^*})$ is a {\em constant} less than $1$, we are done.
However, it is not clear whether $t'(\phi_{\ell^*})$ is bounded away from $1$ or not. (For example, if $t'(\phi_{\ell^*}) = 1-1/n$, we cannot directly conclude on property (4). )
We consider two cases:

(1) $\frac{\lambda_{\ell^*+2}}{\lambda_{\ell^*+1}} < 1/2$. This case is trivial, since for $i \ge \ell^* + 1$, $\lambda_{i+1} < \lambda_i/2$ by the monotonicity of $t'(\cdot)$ and Proposition~\ref{prop:lambda_ratio}. Therefore, property (4) holds with $\gamma =2$.

(2) $\frac{\lambda_{\ell^*+2}}{\lambda_{\ell^*+1}} \ge 1/2$. This case implies $\lambda_{\ell^*+2} \ge \lambda_{\ell^*+1}/2 \ge \lambda_{\ell^*}/4 = \Omega(1)$. Therefore, $\phi_{\ell^*+1} \ge \lambda_{\ell^*+2}  = \Omega(1)$.

Notice that for any {\em constant} $x\in (0,1)$, $t''(x) \ge x^d \sum_{i\ge 2}c_i = \Omega(1)$. ($b>1$ implies $\sum_{i\ge 2}c_i >0$.) Hence, in case (2), we have $t''(\phi_{\ell^*+1}) = \Omega(1)$.

Because $t'(\cdot)$ is continuous, by the mean value theorem, there exists $y \in [\phi_{\ell^*+1},  \phi_{\ell^*}]$, such that
\[ t'(\phi_{\ell^*}) - t'(\phi_{\ell^*+1}) = t''(y)\cdot (\phi_{\ell^*} - \phi_{\ell^*+1})
\ge t''(\phi_{\ell^*+1})\cdot \lambda_{\ell^*+1} = \Omega(1).
\]
Now since $t'(\phi_{\ell^*})<1$ by Eqn.~(\ref{eqn:t'}), we can conclude $\gamma' = t'(\phi_{\ell^*+1})$ is bounded away from $1$, i.e., $\gamma'$ is a constant smaller than $1$. Then for any $i \ge \ell^*+2$, we have $\lambda_{i+1} < \gamma' \cdot \lambda_{i}$. By setting $\gamma = \frac{1}{1-\gamma^{'}}$, we prove property (4).
\end{proof}

%
%

\section{Optimal sybil-proof DR mechanism on chains}

In this section, we discuss the case that the underlying tree is
simply an infinite chain. Each node in the chain has an answer with an
independent probability $1/n$. We design a direct referral mechanism
which is sybil-proof in this case. We also show that, 
in fact,
the DR mechanism is optimal, up to some mild
assumptions.

\subsection{The direct referral reward scheme}
\label{sec:dr_chain}

Let $h$ be the level of the chain that we want to propagate the query to.
Notice that any oblivious reward scheme can be written as a function $r(i,s)$ for $i\ge 1$, $s\ge 0$ and $i+s \le h$, where $r(i,s)$ is the reward to the $i$-th agent on an answer path with length $i+s\le h$, i.e. the first answer appears at level $i+s$. Since we discuss DR mechanisms, we have $r(i,s) = 1$ for any $s\notin \{0,1\}$ and $i+s \leq h$.

Let $p = 1/n$ be the probability that one agent has an answer.
Define $R_{i}=\sum_{s=1}^{h-i}r(i,s)p(1-p)^{s-1}$, i.e., $R_i$ is the expected reward of the $i$-th agent  conditioned on the event that the first $i$ agents do not have any answer.
Let $P_i = \sum_{j=1}^ip(1-p)^{j-1}$ be the probability that there is an answer in $i$ consecutive nodes.
We consider the following reward allocation scheme in the direct referral mechanism.

\begin{definition}
{\bf DR reward allocation scheme on chains.}
Define $r(i,s)$ as:
\begin{align}\label{eq:xichain}
r(i,s)=
\begin{cases}n\cdot {R_{i+1}}
+P_{h-i-1}.
& \text{if $i\leq h-1 \land s=1$,}
\\
\sum_{t=i}^{h-1}r(t,1)+1,  & \text{if $1\leq i\le h \land s=0$,}\\
1 
&\text{$i+s\leq h \land  s>1$,}\\
0, & \text{otherwise.}
\end{cases}
\end{align}
\end{definition}

Notice that $R_i = p\cdot r(i,1) + (1-p)P_{h-i-1}$. Therefore, by definition of $r(i,1)$, we have
\begin{equation}
\label{eqn:R_i}
R_i = R_{i+1} + P_{h-i-1}.
\end{equation}

For what follows, we show that several properties of the direct referral mechanism (a) it is sybil-proof; (b) the expected cost is $O(h^2n)$; and (c) it is {\em the optimal sybil-proof} mechanism, i.e., with the optimal cost.

To show the sybil-proofness, we only need to prove that the following strategy profile is a Nash equilibrium: for each node $v$ with distance less than $h$ from the root, if $v$ does not have an answer, the strategy of $v$ is to propagate the query and does not create sybils; if it has an answer, the strategy of $v$ is to report the answer without creating sybils. For the node at distance $h$, the node will return the answer if it has one; otherwise, it will not propagate further and will not create sybils.

\begin{lemma}
The DR mechanism with above defined reward allocation scheme is sybil-proof on chains.
\end{lemma}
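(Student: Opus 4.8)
The plan is to verify directly that the honest strategy profile described just above is a Nash equilibrium. Fix an agent $v$ at level $i$, assume every other agent plays honestly, and note that whether the query ever reaches $v$ is decided entirely by the agents strictly above $v$, hence is independent of $v$'s own action; so it suffices to show that, conditioned on $v$ receiving the query, no deviation of $v$ improves its expected reward. Since $v$ learns before acting whether it holds an answer, I would further split on that event. I would also first record two simplifications: (i) because SP selects a single answer path whose rewards depend only on its length, it is enough to consider sybil deviations in which $v$ replaces itself by a chain of sybils with its real subtree hanging below the last of them (and, if $v$ has an answer, places that answer at one of the sybils); and (ii) a no-answer agent that declines to propagate gets reward $0$, so that is never profitable. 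Thus only the chain-augmentation deviations remain.

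For the case $v$ has no answer, honest play gives conditional expected reward exactly $R_i$. If $v$ inserts $k$ extra sybils, then with probability $p(1-p)^{s-1}$ the first answer in $v$'s real subtree is $s$ levels below $v$'s former child position (for $s\le h-i-k$); the sybils then on the answer path each collect one unit of routing reward, and for $s=1$ the deepest sybil also collects $r(i+k,1)$. Writing out this expectation and simplifying with $np=1$, the identity $P_{m+1}=(1-p)P_m+p$, and the recurrence $R_i=R_{i+1}+P_{h-i-1}$ (iterated to $R_i=R_{i+k+1}+\sum_{j=1}^{k+1}P_{h-i-j}$), I expect the gap between honest and deviation reward to reduce to $\sum_{j=1}^{k-1}P_{h-i-j}-(k-1)P_{h-i-k}$, which is nonnegative by monotonicity of $\{P_m\}$ (each of the $k-1$ terms is at least $P_{h-i-k}$). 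Pushing the first answer beyond level $h$ only decreases the reward, so honesty is optimal here.

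For the case $v$ has an answer, the conditioning means no ancestor of $v$ holds an answer, so under SP $v$'s answer at level $i$ is always the one selected and honest play gives the deterministic reward $r(i,0)$. I would then argue that every sybil deviation falls into one of two types. (a) $v$ keeps its answer off every winning path: then $v$ faces exactly the no-answer situation with a sybil chain, so by the previous case its reward is at most $R_i$, and $R_i\le r(i,0)$ follows from $R_i=p\,r(i,1)+(1-p)P_{h-i-1}\le r(i,1)+1$ together with $r(i,0)=\sum_{t=i}^{h-1}r(t,1)+1\ge r(i,1)+1$. (b) $v$ relocates its answer to level $i+j$ via a sybil chain and makes it the winner (by attaching its real subtree below): then $v$ collects $r(i+j,0)$ as answer holder, $r(i+j-1,1)$ as direct referral, and $j-1$ units of routing reward. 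Using the telescoping identity $r(i,0)-r(i+j,0)=\sum_{t=i}^{i+j-1}r(t,1)$, non-profitability becomes the inequality $\sum_{t=i}^{i+j-2}r(t,1)\ge j-1$, and each term obeys $r(t,1)=nR_{t+1}+P_{h-t-1}\ge nR_{t+1}\ge nP_{h-t-2}\ge np=1$ whenever $t\le h-3$, which settles the inequality unless the sum reaches the last two levels.

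The hard part is precisely the tail of case (b): the bound $r(t,1)\ge 1$ breaks down for $t\in\{h-2,h-1\}$, where $r(\cdot,1)=O(1/n)$, so closing $\sum_t r(t,1)\ge j-1$ near the bottom of the chain needs the exact formula $r(h-1-k,1)=n\sum_{m=0}^{k-1}P_m+P_k$ and the fact that these partial sums grow quadratically in $k$, which compensates for the vanishing boundary terms. (One clean way to see this would be to reintroduce a unit forwarding cost: the $j+1$ sybils on the relocated path then pay $j+1$ in total, which is exactly the slack that makes the inequality immediate.) Everything else — the reduction to chain deviations, the conditioning steps, and the bulk algebra driven by $R_i=R_{i+1}+P_{h-i-1}$ and the monotonicity of $\{P_m\}$ — should be routine.
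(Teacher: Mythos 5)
Your overall route is the same as the paper's: reduce to chain-of-sybils deviations, handle the no-answer agent by comparing $D(i,k)=R_{i+k}+kP_{h-i-k}$ with $D(i,0)=R_i$ via the recurrence $R_i=R_{i+1}+P_{h-i-1}$ and monotonicity of $\{P_m\}$ (your closed-form gap $\sum_{j=1}^{k-1}P_{h-i-j}-(k-1)P_{h-i-k}\ge 0$ is exactly the paper's induction step telescoped), and handle the answer holder by the telescoping identity $r(i,0)-r(i+j,0)=\sum_{t=i}^{i+j-1}r(t,1)$, which reduces sybil-proofness to $\sum_{t=i}^{i+j-2}r(t,1)\ge j-1$. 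Up to that point everything you write is correct, and your case (a) (answer holder hiding its answer) is a sub-case the paper dispatches with one sentence.

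The step you flag as ``the hard part'' is a genuine gap, and your proposed rescue does not close it. The paper finishes by asserting $r(t,1)\ge 1$ for every $t\le h-1$, but under the literal recursive definition $r(h-1,1)=nR_h+P_0=0$ and $r(h-2,1)=1/n$, exactly as you observe. The quadratic growth of $n\sum_{m}P_m$ does not compensate at the boundary: for an answer holder at level $i=h-2$ inserting $j=2$ sybils, the required inequality is $r(h-2,1)\ge 1$, i.e.\ $1/n\ge 1$, which is false, and indeed that deviation yields reward $r(h-2,2)+r(h-1,1)+r(h,0)=2$ versus the honest $r(h-2,0)=1+1/n$; the case $i=h-3$, $j=3$ fails similarly. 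So the scheme as literally written admits a profitable sybil attack near level $h$, and no amount of summation fixes it. What closes the argument is the model's normalization requirement that every reward on the answer path be at least $1$: once the boundary values are clamped (e.g.\ $r(h-1,1)\ge 1$), the recursion $r(t,1)=nR_{t+1}+P_{h-t-1}$ propagates $r(t,1)\ge 1$ up the chain, and then your inequality $\sum_{t=i}^{i+j-2}r(t,1)\ge j-1$ (equivalently the paper's $r(i+s,k-s)\le r(i+s,1)$) holds term by term. Your parenthetical alternative of reinstating a unit forwarding cost per node on the path also works, for the reason you give. In short: your proof is the paper's proof, you have correctly identified the one step the paper glosses over, but the resolution is the normalization convention rather than the quadratic-sum estimate you propose.
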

\begin{proof}
For the node $v$ with distance smaller than $h$ from the root, if it does not hold
an answer, the expected reward for propagating the query is always
larger than 0. Thus, it will choose to propagate. If $v$ has an
answer, $v$ does not need to propagate the query, since it can fake
any possible path from itself to a node.  Furthermore, since the DR scheme
only rewards nodes of the first $h$ levels, the node at depth $h$ from the root
will not propagate the query.

Based on the above discussion, for the sybil-proofness, we only need to rule out the strategy that one node create sybils.
For the node at distance $h$ from the root, it is not beneficial to generate fake nodes as the scheme only rewards the first $h$ nodes in the chain. So it is sufficient to consider the nodes with distance less than $h$ to the root.

Consider an agent $v$ with distance $i<h$ to the root. We first assume that $v$ does not have an answer. Denote $D(i,k)$ as the expected reward of $v$ if $v$ duplicates $k\leq h-i$ {\em additional} sybils conditioned on the event that no node from the root to $v$ has an answer.
Then we have
\begin{align*}
D(i,k) &= R_{i+k} + k\cdot P_{h-i-k}
 = R_{i+k-1} - P_{h-i-k} + k\cdot P_{h-i-k}\\
& \le R_{i+k-1} + (k-1)\cdot P_{h-i-k+1}
= D(i,k-1).
\end{align*}
The second equality is by Eqn.~(\ref{eqn:R_i}).
One can then inductively show $D(i,k) \leq D(i,0)$. In other words, node $v$ will not benefit by generating sybils if it does not have an answer.



Next we consider the case that $v$ has an answer and all previous
nodes do not, i.e., $v$ is the first answer holder.  Assume that $v$
generates $k$ sybils, where $k\leq h-i$. Since the DR scheme only
rewards the first answer, $v$ will place the answer at the last sybil. (Otherwise, it may just generate less sybils.) Then the reward $v$ gets in this case is
\begin{align*}
&\sum_{s=0}^{k-1}r(i+s,k-s)+r(i+k,0)\leq\sum_{s=0}^{k-1}r(i+s,1)+r(i+k,0)\\
=&\sum_{s=0}^{k-1}r(i+s,1)+\sum_{s=i+k}^{h-1}r(s,1)+1 =\sum_{t=i}^{h-1}r(t,1)+1 = r(i,0)
\end{align*}
The inequality comes from the fact that $r(i,1)\geq 1$. By the above
inequality, $v$ gets the highest reward without duplication.
Combining all above together, we have shown that the DR mechanism is
sybil-proof.
\end{proof}

\subsection{Efficiency of the DR mechanism}
Next we upper bound the expected total reward of the DR mechanism on
chains. We first compute the values of $r(i,1)$ in the following
lemma.
\begin{proposition}
For $i\le h-1$, $r(i,1) \leq nhP_h+ 1, \mbox{ and } r(i,0) \leq nh^2P_h + h$.
\end{proposition}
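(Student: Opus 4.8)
The plan is to first turn the recursive description of the rewards into closed-form upper bounds, beginning with $R_{i+1}$, since $r(i,1) = n R_{i+1} + P_{h-i-1}$ depends on it directly. Unrolling the recurrence $R_i = R_{i+1} + P_{h-i-1}$ from Eqn.~(\ref{eqn:R_i}) down to the base case $R_h = 0$ (the sum defining $R_h$ is empty), one obtains $R_i = \sum_{k=0}^{h-i-1} P_k$, and hence $R_{i+1} = \sum_{k=0}^{h-i-2} P_k$, which is a sum of at most $h-1$ terms.

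Next I would bound the terms $P_k$ uniformly. Since $P_k = 1 - (1-p)^k$ is non-decreasing in $k$, we have $0 \le P_k \le P_h \le 1$ for every $0 \le k \le h$. Plugging this in gives $R_{i+1} \le (h-i-1) P_h \le (h-1) P_h$ and $P_{h-i-1} \le 1$, so
\[ r(i,1) = n R_{i+1} + P_{h-i-1} \le n(h-1) P_h + 1 \le n h P_h + 1. \]
For the bound on $r(i,0)$ I would use its definition $r(i,0) = \sum_{t=i}^{h-1} r(t,1) + 1$: this is a sum of $h-i$ terms $r(t,1)$ with $i \le t \le h-1$, and since $i \ge 1$ there are at most $h-1$ of them; applying the bound just proved to each summand yields $r(i,0) \le (h-1)(n h P_h + 1) + 1 = (h-1)\, n h P_h + h \le n h^2 P_h + h$.

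I do not expect any real obstacle: the entire argument rests on monotonicity of the partial sums $P_k$ together with the recurrence already established in Eqn.~(\ref{eqn:R_i}). The only point that needs care is the index bookkeeping — the constant $+1$ built into the bound on $r(i,1)$ is precisely what absorbs the leftover $P_{h-i-1} \le 1$ term, and after the summation it is the hypothesis $i \ge 1$ (so that $h - i \le h - 1$) that keeps the additive slack at $h$ rather than $h+1$ in the bound on $r(i,0)$.
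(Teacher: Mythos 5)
Your proof is correct and follows essentially the same route as the paper's: both unroll the recurrence $R_i = R_{i+1} + P_{h-i-1}$ (the paper writes it as $R_i \le R_{i+1} + P_h$, hence $R_i \le hP_h$), then plug into $r(i,1) = nR_{i+1} + P_{h-i-1}$ and sum over $t$ for $r(i,0)$. Your version is just slightly more explicit about the closed form $R_i = \sum_{k=0}^{h-i-1} P_k$ and the index bookkeeping.
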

\begin{proof}
By Eqn.~(\ref{eqn:R_i}),
$R_i \leq R_{i+1} + P_h$,
which implies $R_i \leq hP_h$. By definition, $r(i,1) = n R_{i+1} + P_{h-i-1} \leq nhP_h+1$. The result on $r(i,0)$ directly follows.
\end{proof}


\begin{lemma}
The expected cost of the DR scheme is $\Theta(P_h^2nh^2)$.
\end{lemma}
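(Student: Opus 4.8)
The plan is to rewrite the expected cost as a single sum over the position of the first answer and then bound it from both sides. On the chain the SP rule makes the answer path deterministic given the answers: if the first answer among the top $h$ nodes lies at level $\ell$ --- an event of probability $\lambda_\ell=p(1-p)^{\ell-1}$, with $\sum_{\ell=1}^{h}\lambda_\ell=P_h$ --- then the answer path is the prefix $1,\dots,\ell$, so node $\ell$ gets $r(\ell,0)$, node $\ell-1$ gets $r(\ell-1,1)$ (when $\ell\ge2$), and the remaining $\ell-2$ nodes each get the unit reward; call this total $C_\ell$, so $\E[\mathrm{cost}]=\sum_{\ell=1}^{h}\lambda_\ell C_\ell$ and $C_\ell\le r(\ell,0)+r(\ell-1,1)+(\ell-2)$. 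Everything then reduces to the sizes of $r(\ell,0)$ and $r(\ell,1)$, which were estimated in the preceding proposition.

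For the upper bound I would plug in $r(i,1)\le nhP_h+1$ and $r(i,0)\le nh^2P_h+h$. Since $P_h\ge 1/n$ for every $h\ge1$, one has $h\le nhP_h\le nh^2P_h$, hence $C_\ell=O(nh^2P_h)$ uniformly in $\ell$, and therefore $\E[\mathrm{cost}]\le O(nh^2P_h)\sum_{\ell=1}^{h}\lambda_\ell=O(nh^2P_h^2)$.

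The lower bound is the substantive part. Dropping everything except the answer holder's reward, $\E[\mathrm{cost}]\ge\sum_{\ell=1}^{h}\lambda_\ell r(\ell,0)$; substituting $r(\ell,0)=1+\sum_{t=\ell}^{h-1}r(t,1)$ and interchanging the order of summation (using $\sum_{\ell=1}^{t}\lambda_\ell=P_t$) yields the clean identity
\[
\sum_{\ell=1}^{h}\lambda_\ell r(\ell,0)=P_h+\sum_{t=1}^{h-1}P_t\,r(t,1)\;\ge\;n\sum_{t=1}^{h-1}P_t\,R_{t+1},
\]
where the inequality uses $r(t,1)=nR_{t+1}+P_{h-t-1}\ge nR_{t+1}$ and $R_{t+1}=\sum_{m=0}^{h-t-2}P_m$ (the latter by telescoping Eqn.~(\ref{eqn:R_i})). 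Now restrict the sum to the window $t\in[\lceil h/4\rceil,\lfloor h/2\rfloor]$, which (for $h$ above a constant) has $\Omega(h)$ indices; for each such $t$, monotonicity of $\{P_m\}$ gives $P_t\ge P_{\lceil h/4\rceil}$ and $R_{t+1}\ge\sum_{m=\lceil h/4\rceil}^{h-t-2}P_m\ge\Omega(h)\cdot P_{\lceil h/4\rceil}$, so the restricted sum is $\Omega(nh^2)\cdot P_{\lceil h/4\rceil}^2$. Finally $P_{\lceil h/4\rceil}=\Omega(P_h)$, because $\lceil h/4\rceil\ge h/4$ gives $(1-p)^h\ge\big((1-p)^{\lceil h/4\rceil}\big)^4$, hence $P_h=1-(1-p)^h\le 1-(1-P_{\lceil h/4\rceil})^4\le 4P_{\lceil h/4\rceil}$. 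Combining, $\E[\mathrm{cost}]=\Omega(nh^2P_h^2)$, which with the upper bound proves the claim; the finitely many small values of $h$ are immediate, as there both the cost and $nh^2P_h^2$ are $\Theta(1/n)$.

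The main obstacle is exactly this lower bound: one must rule out the rewards $r(t,1)$ concentrating near $t=h$ (where they degenerate to the unit reward) while the first‑answer probability mass $\lambda_t$ concentrates near $t=1$. The window $[h/4,h/2]$ and the comparison $P_{h/4}=\Omega(P_h)$ are chosen precisely so the argument runs uniformly in $n$ and $h$, covering both the sparse regime $h\ll n$ (where $P_h\approx h/n$ and the cost is $\Theta(h^4/n)$) and the dense regime $h\gg n$ (where $P_h\to1$ and the cost is $\Theta(nh^2)$).
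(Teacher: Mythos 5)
Your proof is correct and follows essentially the same route as the paper: the upper bound plugs in the bounds $r(i,1)\le nhP_h+1$ and $r(i,0)\le nh^2P_h+h$, and the lower bound keeps only the answer-holder's reward and shows, via the telescoped form $R_{t+1}=\sum_m P_m$, that $r(i,0)=\Omega(nh^2P_{\Theta(h)})$ on a constant fraction of the levels, finishing with $P_{ch}=\Omega(P_h)$. Your interchange of summation and the choice of the window $[h/4,h/2]$ rather than the prefix $[1,h/8]$ are only cosmetic differences from the paper's argument.
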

\begin{proof}
We first consider the upper bound. Clearly, the cost is dominated by the $r(i,0)$s, which is at most
$P_h\cdot \max_i\{r(i,0)\}= 
O(P_h^2nh^2)$.

Now consider the lower bound. By Eqn.~(\ref{eqn:R_i}), for $i\leq h/4$, $R_i = \sum_{j=1}^{h-i-1}P_j\geq \frac{h}{4}\cdot P_{h/2}$.

By definition of $r(\cdot)$, for $i\leq h/4-1$, $r(i,1) \ge n R_{i+1} \ge  \frac{nh}{4}\cdot P_{h/2}$.
Then for all $i \leq h/8$, $r(i,0) \ge \sum_{j=i}^{h/4} r(j,1) \ge \frac{h}{8}\cdot \frac{nh}{4} P_{h/2}$.
Therefore, the expected cost is at least
\[ P_{h/8}\cdot \min_{i\le h/8}\{r(i,0)\} \ge \frac{nh^2}{32}P_{h/8}\cdot P_{h/2}.\]
The lower bound is attained since $P_{h/2}\geq P_h/2$ and $P_{h/8}\ge P_h/8$.
\end{proof}

We have the main result of this section.

{\sc Theorem~\ref{thm:chain} (restated)}
{\em 	\thmChain
}

\subsection{Optimality of the DR mechanism on chains}\label{subsec:opt}

In fact, our DR mechanism is {\em optimal} with respect to a general class of query incentive mechanisms on chains. To proceed with our discussion, we focus on the query incentive mechanisms with following properties:
\begin{enumerate}
\item The answer selection is deterministic on the chain. (Notice that both RW and SP are deterministic and identical on a chain.)
\item The mechanism is sybil-proof.
\item The mechanism must retrieve an answer if there is one in the first $h$ agents.
\item The reward allocation scheme is normalized, i.e., the reward to each node from the root to the first answer must be at least $1$.
\end{enumerate}
We call a query incentive network is {\em regular} if all four constraints are satisfied.
Notice that our DR mechanism is {\em regular}. Our DR mechanism is in fact has the smallest cost in all resulting configurations among all {\em regular} query incentive mechanisms. 
We provide a proof in Appendix~\ref{section:chainopt}.



\begin{theorem}
\label{thm:chainopt}
The DR mechanism on chains is an optimal regular query incentive mechanism.
\end{theorem}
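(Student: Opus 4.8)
The plan is to argue that any regular query incentive mechanism on the chain must incur at least as much expected cost as our DR mechanism, by showing that the DR reward scheme $r(i,s)$ defined in~\eqref{eq:xichain} is a pointwise lower bound: for any regular mechanism with reward scheme $r'(i,s)$, we have $r'(i,s) \ge r(i,s)$ for every valid $(i,s)$. Since the expected cost is $\E_P[\sum_{u\in P} f_u(P)]$ and the answer path distribution on the chain is fixed (the same for all mechanisms, by property~(1) and~(3): the path is always the prefix up to the first answer among the first $h$ nodes), pointwise domination of the reward functions immediately yields domination of expected costs, proving optimality. So the entire argument reduces to establishing these pointwise inequalities.

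First I would set up the constraints that regularity imposes on $r'(\cdot,\cdot)$. Normalization (property~4) gives $r'(i,s)\ge 1$ for all $i+s\le h$. Sybil-proofness (property~2) is the key source of inequalities: replaying the two arguments in the proof of the sybil-proofness lemma but as \emph{necessary} conditions rather than sufficient ones. Specifically, for an agent at distance $i<h$ \emph{with} an answer, not creating $k$ sybils must be (weakly) optimal; taking $k=1$ and using that the answer must be placed optimally, this forces $r'(i,0) \ge r'(i,1) + r'(i+1,0) - (\text{something})$, and by induction $r'(i,0) \ge \sum_{t=i}^{h-1} r'(t,1) + 1$ — matching the DR definition of $r(i,0)$ in terms of the $r(\cdot,1)$ values, provided the $r'(t,1)\ge 1$ bounds are tight enough. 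For an agent \emph{without} an answer, the no-sybil condition $D'(i,0)\ge D'(i,k)$ must hold; here I expect the relevant instance ($k=1$) to give $R'_i \ge R'_{i+1} + P_{h-i-1}$, i.e., the recurrence~\eqref{eqn:R_i} becomes an inequality $R'_i \ge R'_{i+1} + P_{h-i-1}$ for any regular mechanism. Combined with the boundary condition forced by normalization at the deepest levels ($R'_h = 0$, and $r'(i,s)\ge 1$ near the boundary), unrolling this recurrence downward gives $R'_i \ge R_i$, hence $r'(i,1) = n(R'_i - (1-p)P_{h-i-1})/p \ge n R_{i+1} + P_{h-i-1} = r(i,1)$, and then $r'(i,0)\ge r(i,0)$ follows.

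The main obstacle, I expect, is making the sybil-proofness constraints yield \emph{exactly} the DR recurrences and not something weaker. The subtlety is that sybil-proofness for a no-answer agent only says the agent does not want to deviate to \emph{any} $k\ge 1$, and a priori the binding constraint could be some $k>1$ rather than $k=1$, which would give a weaker lower bound on $R'_i$. I would handle this by checking that the $k=1$ constraint, together with normalization at the boundary, already suffices to force $R'_i \ge R_i$ by downward induction on $i$ (from $i=h$ to $i=1$): at each step only the one-step inequality $R'_i \ge R'_{i+1} + P_{h-i-1}$ is needed, and the inductive hypothesis supplies $R'_{i+1}\ge R_{i+1}$. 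A second, more mundane obstacle is the answer-holder argument: I need the deviation where the agent inserts $k$ sybils and places its answer at the end to be \emph{available} (it is, by property~3 and the determinism of answer selection) and to show its payoff $\sum_{s=0}^{k-1} r'(i+s, k-s) + r'(i+k,0)$ must be $\le r'(i,0)$; taking $k = h-i$ and using $r'(i+s, k-s)\ge 1$ collapses this to $\sum_{t=i}^{h-1} r'(t,1) + 1 \le r'(i,0)$ — wait, the inequality direction must be checked carefully, since here sybil-proofness bounds $r'(i,0)$ from \emph{below}, which is what we want. I would verify that combining the answer-holder lower bound on $r'(i,0)$ with the no-answer lower bound on $r'(i,1)$ closes the loop without circularity, by doing the induction on $i$ decreasing and handling $r'(i,1)$ before $r'(i,0)$ at each level.
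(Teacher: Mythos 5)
Your overall architecture matches the paper's: reduce optimality to the pointwise inequalities $r'(i,s)\ge r(i,s)$ and $r'(i,0)\ge r(i,0)$, extract them from the $k=1$ sybil-proofness constraints, and do a downward induction on $i$ that handles $r'(i,1)$ before $r'(i,0)$ at each level. The answer-holder half is exactly the paper's: the one-sybil deviation forces $r'(i,0)\ge r'(i,1)+r'(i+1,0)$ (no correction term is needed), which combined with $r'(t,1)\ge r(t,1)$ unrolls to $r'(i,0)\ge r(i,0)$. Your worry about which $k$ is binding is resolved the same way the paper resolves it: only the $k=1$ constraint is ever invoked, at every level of the induction.

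The genuine gap is in the no-answer half. You aggregate the $k=1$ constraint into $R'_i\ge R'_{i+1}+P_{h-i-1}$, deduce $R'_i\ge R_i$, and then recover $r'(i,1)$ via the identity $r'(i,1)=n\bigl(R'_i-(1-p)P_{h-i-1}\bigr)/p$. That identity holds for the DR scheme only because there $r(i,s)=1$ for all $s\ge 2$; for an arbitrary regular competitor $R'_i=p\,r'(i,1)+\sum_{s\ge2}r'(i,s)p(1-p)^{s-1}$ and the tail can be arbitrarily large, so a scheme with $r'(i,1)=1$ and $r'(i,2)$ huge satisfies $R'_i\ge R_i$ while violating $r'(i,1)\ge r(i,1)$. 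Hence $R'_i\ge R_i$ alone cannot deliver the conclusion; the aggregated inequality discards exactly the information you need. The fix --- and what the paper actually does --- is to keep the $k=1$ constraint unaggregated: the deviating agent at level $i$ still collects its own shifted reward, so $R'(i,1)=\sum_{s=1}^{h-i-1}p(1-p)^{s-1}\bigl(r'(i,s+1)+r'(i+1,s)\bigr)$, and when this is subtracted from $R'(i,0)$ the terms $r'(i,s)$ with $s\ge2$ appear on both sides with coefficients differing only by the factor $(1-p)$. They nearly cancel, leaving $p\,r'(i,1)\ge p\sum_{s=1}^{h-i-1}p(1-p)^{s-1}r'(i,s+1)+R'_{i+1}\ge p\,P_{h-i-1}+R_{i+1}=p\,r(i,1)$, where normalization is applied only to the small residual and the induction hypothesis handles $R'_{i+1}\ge R_{i+1}$. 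A smaller omission: before any of this one must justify that a regular sybil-proof mechanism can be written in the $r(i,s)$ form at all, i.e., that it rewards only the first answer holder and its ancestors; the paper proves this as a separate lemma by letting the first answer holder fabricate the remainder of the chain with sybils and collect all downstream rewards.
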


\section{Sybil-proof DR mechanisms on arbitrary branching processes}

In this section, we present a sybil-proof query incentive mechanism
for an arbitrary branching processes with branching factor $b>1$.  In
particular, we present a Direct Referral (DR) 
mechanism
for any interested height $h$. We will show that the DR mechanism,
which uses the SP answer selection scheme, is sybil-proof. After that,
we show that the expected cost for the DR scheme is $O(h^2)$. It is desirable to point out that compared with previous works, our
analysis does not depend on a particular height $h$, and it does not
require $b\geq 2$.

\subsection{Direct referral query incentive mechanisms with height $h$}

The DR mechanism will be using the SP answer selection scheme. Similar
to the chain case, the reward allocation scheme in the DR mechanism is
oblivious. In particular, the reward allocation scheme can be described by a
function
$r(\cdot,\cdot)$, where $r(i,s)$ is the reward of the $i$-th agent in the selected answer path and the selected answer holder is the $i+s$-th agent for $i\ge 1$ and $s\ge 0$.


{To simply the notation, we decompose the {\em referral reward} (with $s\ge 1$) $r(i,s)$ into two parts:
 $r(i,s) = x(i,s) + y(i,s)$.
$x(i,s)\ge 0$ is for the direct referral and $y(i,s)\ge 0$ ensures that the reward is at least 1.
Specifically, the referral rewards in our the DR mechanism for height $h$ with $i \leq h-1$ and $s\ge 1$ is defined as
$r(i,s)=x(i,s)+y(i,s)$:
}
\begin{align}
x(i,s)=
\begin{cases}\displaystyle{\max_{i+1\le  j\le h}}\left\{x(j,1)+ \frac{j-i}{\lambda_{i+1}}\sum_{\ell = i+1}^h\lambda_\ell \right\}, & \text{if $i\leq h-1$} \text{ and $s=1$,}
\\
0,  & \text{otherwise.}
\end{cases}
\label{eqn:xis}
\end{align}

and
\begin{equation}
y(i,s)=
\begin{cases} 1 & \text{if $i+s\leq h$ and $s > 1$,}
\\
0,  & \text{otherwise.}
\end{cases}
\end{equation}
Finally, the reward of the selected answer holder at level $i$ is defined $a_i = r(i,0)$ for $i\leq h$ as follows. $a_h = 1$ and for $i<h$,

\begin{equation}
\label{eqn:tree:a_i}
a_i = x(i,1) + a_{i+1} + 1 \Rightarrow a_i = (h-i+1)+ \sum_{i\leq j\leq h} x(j,1).
\end{equation}

Informally, in the above reward scheme, if the selected answer is at level
$i\leq h$, the answer holder receives $a_i$, its direct referral at
level $i-1$ receives $x_{i-1}+1$ and all other agents from the root to
the selected answer holder receive $1$.
For simplicity, we abbreviate the sequence $\{x(i,1)\}$ as $\{x_i\}$.

{\bf Remark:} Our mechanism requires a specified height $h$. Furthermore, to compute the rewards, we need the knowledge of the branching process. Notice that, such knowledge is also required in previous literature to compute the initial reward as well as the contracts between the agents.


\subsection{Sybil-proofness of the DR query incentive mechanism}

We show that the DR  mechanism defined above is
sybil-proof. Eqn.~(\ref{eqn:tree:a_i}) implies that if one agent at
level $i$ has an answer, it does not have incentive to generate
sybils, i.e., $a(i)$ is its largest possible reward and generating
sybils will not increase its chance to be selected.
So to show the sybil-proofness of the DR mechanism, we only need to argue for those agents who do not have answers. The following lemma characterizes the probabilities that one agent is rewarded when it does not have an answer.

\def\Rev{\mathrm{Rev}}
\def\DR{\mathrm{DR}}
\def\NA{\mathrm{NA}}


\begin{proposition}
\label{prop:drprob}
For an agent $v$ at level $i<h$ in the underlying $d$-ary tree of the branching process,
let $\Rev(v)$ be the event that $v$ is on the path from the root to the selected answer holder $u\neq v$ and $\DR(v)$ be the event that $u$ is a child of $v$.  We have
\[
\pr[\Rev(v)] = \frac{\sum_{j=i+1}^h\lambda_i}{d^{i}} \mbox{ and }
\pr[\DR(v)] = \frac{\lambda_{i+1}}{d^{i}}.
\]
Let $\NA(v)$ be the event that $v$ does not have an answer.
\[
\pr[\Rev(v)\,|\, \NA(v)] = \frac{n}{n-1}\cdot \frac{\sum_{j=i+1}^h\lambda_i}{d^{i}} \mbox{ and }
\pr[\DR(v)\,|\, \NA(v)] = \frac{n}{n-1}\cdot\frac{\lambda_{i+1}}{d^{i}}.
\]
\end{proposition}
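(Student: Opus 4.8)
The plan is to compute the unconditional probabilities $\pr[\Rev(v)]$ and $\pr[\DR(v)]$ first by reducing each event to a statement about the branching process restricted to the subtree hanging below $v$, then obtain the conditional versions by a short Bayes-type computation. Throughout I work in the underlying $d$-ary tree (before the branching process turns edges on), so the ``position'' of $v$ at level $i$ is fixed and $v$ has $d^i$ possible locations; by anonymity and symmetry of the branching process, each such location is equally likely to be on the answer path, which is where the $1/d^i$ factors come from.

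First I would fix the location of $v$ and analyze $\DR(v)$. Under the SP answer selection scheme the selected answer is a shortest reachable answer, and ties are broken by a random walk from the root. The key observation is: $v$ lies on the answer path with its child being the answer holder exactly when (a) the whole path from the root to $v$ is active and query-propagating, (b) none of the first $i$ levels of the tree (i.e.\ the ancestors of $v$ and their other subtrees reachable along shorter-or-equal routes) contains an answer at distance $<i+1$ from the root, and (c) some active child of $v$ has an answer, and (d) the random-walk tie-break happens to descend to $v$ and then to an answer-holding child. Using anonymity, the probability that the selected shortest answer sits at level exactly $i+1$ and the path passes through this particular node $v$ is, after summing over the symmetric possibilities, precisely $\lambda_{i+1}/d^i$: the event ``first answer in the branching process appears at level $i+1$'' has probability $\lambda_{i+1}$, and conditioned on that the answer path is a uniformly random root-to-level-$(i+1)$ path, each fixed node at level $i$ appearing on it with probability $1/d^i$. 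A parallel argument handles $\Rev(v)$: $v$ is on the answer path with the answer strictly below $v$ iff the first answer in the branching process is at some level $j$ with $i+1 \le j \le h$ and the (uniformly random) answer path passes through $v$; summing the disjoint events over $j$ gives $\sum_{j=i+1}^h \lambda_j / d^i$. I expect the careful justification that the SP-plus-random-walk answer path is, conditioned on the first answer being at level $j$, a uniformly random root-to-level-$j$ path through the active tree — and hence hits a fixed $d$-ary-tree node at level $i$ with probability $1/d^i$ — to be the main obstacle; it rests on the exchangeability built into the GW process and on the anonymity/uniform tie-breaking of SP, but it must be stated precisely enough to be convincing.

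For the conditional statements, I would condition on $\NA(v)$, the event that $v$ has no answer (probability $1 - 1/n = (n-1)/n$). The events $\Rev(v)$ and $\DR(v)$ already force $v$ \emph{not} to be the answer holder, but they do not by themselves presuppose that $v$ has no answer — except that under the SP scheme, if $v$ had an answer it would report it and be the shortest answer on its branch, so in fact $\Rev(v)$ and $\DR(v)$ are each contained in $\NA(v)$ up to the event that some equally-short answer elsewhere is selected instead. Cleanly: the answer status of $v$ is independent of everything determining whether the answer path reaches $v$ and of whether a child of $v$ holds an answer, so $\pr[\Rev(v)] = \pr[\Rev(v)\wedge \NA(v)]$ once we note $\Rev(v)$ requires the selected answer to be a strict descendant of $v$ (and SP would never pick a strictly-deeper answer over $v$'s own). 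Hence $\pr[\Rev(v)\mid \NA(v)] = \pr[\Rev(v)]/\pr[\NA(v)] = \frac{n}{n-1}\cdot\frac{\sum_{j=i+1}^h\lambda_j}{d^i}$, and identically $\pr[\DR(v)\mid \NA(v)] = \frac{n}{n-1}\cdot\frac{\lambda_{i+1}}{d^i}$. This last step is a one-line division; the real content is entirely in the unconditional computation of the previous paragraph.
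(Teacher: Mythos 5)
Your proposal is correct and follows essentially the same route as the paper: the unconditional probabilities come from the symmetry observation that, conditioned on the first answer being at level $j$, the answer path passes through any fixed level-$i$ node of the $d$-ary tree with probability $1/d^i$ (summed over $j=i+1,\dots,h$ for $\Rev(v)$, and the single term $j=i+1$ for $\DR(v)$), and the conditional versions follow because $\Rev(v)$ and $\DR(v)$ each imply $\NA(v)$ under SP, so one simply divides by $\pr[\NA(v)]=(n-1)/n$. The extra care you devote to justifying the uniformity of the answer path is reasonable but is exactly the one-line symmetry/anonymity argument the paper uses; no new ideas are needed.
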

\begin{proof}
$\Rev(v)$ happens if the selected answer $u$ is at level $i+1$ or
lower, which is with probability $\sum_{j=i+1}^h \lambda_j$ as the
mechanism only retrieves an answer in the first $h$ levels. Notice
that there are exactly $d^{i}$ agents at level $i$. (The root is
treated as level $0$.) By symmetry, the probability that $u$ is in the
subtree of $v$ is exactly $\frac{1}{d^{i}}$. Therefore, $\pr[\Rev(v)]
= \sum_{j=i+1}^h \lambda_j/{d^{i}}$. By the same argument, we have
$\pr[\DR(v)] = {\lambda_{i+1}}/{d^{i}}$.

Now consider the two probabilities conditioned on the event $\NA(v)$ with $\pr[\NA(v)] = (1-1/n)$. Notice that
$\Rev(v)$ (resp. $\DR(v)$) implies $\NA(v)$. In other words, $\pr[\Rev(v) \land \NA(v)] = \pr[\Rev(v)]$ (resp.
$\pr[\DR(v) \land \NA(v)] = \pr[\DR(v)]$). The results then come from Bayes' theorem.
\end{proof}

\begin{lemma}
The above DR mechanism is sybil-proof.
\end{lemma}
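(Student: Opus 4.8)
The plan is to reduce, exactly as on the chain, to a single agent $v$ at a level $i<h$ that has no answer and that augments the tree with a chain of $k\ge 1$ sybils. Eqn.~(\ref{eqn:tree:a_i}) already rules out deviations by an answer holder, and since the SP answer selection and the reward scheme depend only on the length of the answer path, a non-answer agent profits only when its genuine subtree contains the selected answer; hence a bushy sybil tree is never better than the chain sitting on the path from $v$'s parent to that subtree. Such an attack pushes $v$'s genuine subtree down by $k$ levels, so its root $v'$ sits at level $i+k$ and there are $k$ sybils at levels $i,\dots,i+k-1$ on the answer path; crucially, it leaves the part of the tree \emph{outside} $v$'s subtree completely unchanged (the sybils carry no answers and do not branch). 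If $i+k\ge h$ the attack yields payoff $0\le U_0$, so I may assume $1\le k\le h-i-1$.

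Next I would write $v$'s payoff under the attack in closed form. Let $\pi^{(k)}_s$ be the probability, conditioned on $\NA(v)$, that the selected answer lies in $v'$'s subtree with the shallowest genuine answer of that subtree at absolute level $i+k+s\le h$, and put $\Sigma_k=\sum_{s\ge 1}\pi^{(k)}_s$. Accounting for the rewards — $v'$ collects $r(i+k,1)=x(i+k,1)+1$ when $s=1$ and $1$ when $s\ge 2$, while each of the $k$ sybils, being at distance $\ge s+1\ge 2$ from the answer, collects $1$ — the payoff collapses to $U_k=\pi^{(k)}_1\,x(i+k,1)+(k+1)\,\Sigma_k$, and the no‑attack payoff is $U_0=\pi^{(0)}_1\,x(i,1)+\Sigma_0$. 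By the self-similarity of the branching process and the fact that the outside of $T_v$ is untouched, $\pi^{(k)}_s=(b/d)^i\,\lambda_s\,P_{\ge}(i+k+s)$, where $P_{\ge}(m)$ is the probability that no answer outside $T_v$ lies at level $<m$ (together with the random-walk tie-break contribution at level $m$); the same $P_{\ge}$ occurs for every $k$, so $\pi^{(k)}_s=\pi^{(0)}_s\cdot P_{\ge}(i+k+s)/P_{\ge}(i+s)$. Proposition~\ref{prop:drprob} then identifies $\pi^{(0)}_1=\pr[\DR(v)\mid\NA(v)]=\frac{n}{n-1}\frac{\lambda_{i+1}}{d^{i}}$ and $\Sigma_0=\pr[\Rev(v)\mid\NA(v)]=\frac{n}{n-1}\frac{\sum_{\ell=i+1}^{h}\lambda_\ell}{d^{i}}$, so $\Sigma_0/\pi^{(0)}_1=\bigl(\sum_{\ell=i+1}^{h}\lambda_\ell\bigr)/\lambda_{i+1}$.

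The inequality $U_0\ge U_k$ then follows by combining three facts. (1) The defining maximum in Eqn.~(\ref{eqn:xis}) with $j=i+k$ (legal since $i+1\le i+k\le h$) gives $x(i,1)-x(i+k,1)\ge \frac{k}{\lambda_{i+1}}\sum_{\ell=i+1}^{h}\lambda_\ell$ — this is the feature for which $x(\cdot,1)$ was designed, and it is essentially tight. (2) $P_{\ge}$ is nonincreasing, so $\pi^{(k)}_1\le\pi^{(0)}_1$ and $\Sigma_k\le\Sigma_0$. (3) $m\mapsto P_{\ge}(m+k)/P_{\ge}(m)$ is nonincreasing — a log-concavity of $P_{\ge}$ coming from the fact that the (expected) number of active nodes outside $T_v$ at a given level grows monotonically with the level — so for every $s\ge 1$, $P_{\ge}(i+k+s)/P_{\ge}(i+s)\le P_{\ge}(i+k+1)/P_{\ge}(i+1)=\pi^{(k)}_1/\pi^{(0)}_1$, whence $\pi^{(k)}_s\le(\pi^{(k)}_1/\pi^{(0)}_1)\,\pi^{(0)}_s$ and therefore $\Sigma_k\le(\pi^{(k)}_1/\pi^{(0)}_1)\,\Sigma_0=\pi^{(k)}_1\cdot\bigl(\sum_{\ell=i+1}^{h}\lambda_\ell\bigr)/\lambda_{i+1}$. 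Writing
\[
U_0-U_k=(\pi^{(0)}_1-\pi^{(k)}_1)\,x(i,1)+\pi^{(k)}_1\bigl(x(i,1)-x(i+k,1)\bigr)+\bigl(\Sigma_0-(k+1)\Sigma_k\bigr),
\]
the first term is $\ge 0$; by (1) the second is $\ge \pi^{(k)}_1\frac{k}{\lambda_{i+1}}\sum_\ell\lambda_\ell\ge k\Sigma_k$ using (3); hence $U_0-U_k\ge k\Sigma_k+\Sigma_0-(k+1)\Sigma_k=\Sigma_0-\Sigma_k\ge 0$ by (2). Together with the answer-holder case this gives sybil-proofness.

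I expect the main obstacle to be item (3), i.e.\ making the comparison of the relocated and original ``outside competition'' probabilities rigorous: one must set up $P_{\ge}$ so that the random-walk tie-breaking at the shallowest-answer level enters $\pi^{(k)}_s$ and $\pi^{(0)}_s$ compatibly, and then derive the log-concavity-type monotonicity from the growth of the branching process — presumably via the single-peaked and geometric-decay estimates of Lemma~\ref{lem:lambda_charaterization}. The bookkeeping that turns the informal ``relocated subtree'' picture into the exact identity $\pi^{(k)}_s=(b/d)^i\lambda_s P_{\ge}(i+k+s)$, and the matching extension of Proposition~\ref{prop:drprob}, is the other place that needs care.
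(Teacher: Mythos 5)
Your overall skeleton --- reduce to a non-answer agent $v$ attacking with a chain of $k$ sybils, compare expected rewards $U_0$ and $U_k$, and invoke the defining inequality $x(i,1)-x(i+k,1)\ge \frac{k}{\lambda_{i+1}}\sum_{\ell=i+1}^h\lambda_\ell$ from Eqn.~(\ref{eqn:xis}) --- is exactly the paper's. The gap is your item (3), together with the exact factorization $\pi^{(k)}_s=(b/d)^i\lambda_s P_{\ge}(i+k+s)$ on which it rests. Both are asserted rather than proved, and neither is routine: the factorization glosses over the SP tie-breaking when the shallowest answers inside and outside $T_v$ collide at the same level (the random-walk selection probability is not a product over the two sides, and the $(b/d)^i$ prefactor is not how Proposition~\ref{prop:drprob} accounts for $v$'s activeness), and the claimed log-concavity of $P_{\ge}$ is a nontrivial correlation property of the branching process that Lemma~\ref{lem:lambda_charaterization} does not deliver. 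As written, $U_0\ge U_k$ therefore hinges on an unestablished claim, which you yourself flag as the main obstacle.

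The fix is that item (3) is an artifact of your grouping and can be deleted. Write $\pi^{(0)}_1 x(i,1)-\pi^{(k)}_1 x(i+k,1)=\pi^{(0)}_1\bigl(x(i,1)-x(i+k,1)\bigr)+\bigl(\pi^{(0)}_1-\pi^{(k)}_1\bigr)x(i+k,1)$, i.e.\ pair the reward drop $x(i,1)-x(i+k,1)$ with the \emph{no-attack} probability $\pi^{(0)}_1$ rather than with $\pi^{(k)}_1$. The second piece is nonnegative by your item (2). For the first piece, item (1) and the identity $\pi^{(0)}_1\cdot\frac{\sum_{\ell=i+1}^h\lambda_\ell}{\lambda_{i+1}}=\Sigma_0$ (immediate from Proposition~\ref{prop:drprob}) give at least $k\Sigma_0$, so $U_0-U_k\ge k\Sigma_0+\Sigma_0-(k+1)\Sigma_k=(k+1)(\Sigma_0-\Sigma_k)\ge 0$, again by item (2). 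This is precisely the paper's argument: it upper-bounds the attacker's direct-referral and on-path probabilities by their no-attack values $\frac{n}{n-1}\lambda_{i+1}/d^i$ and $\frac{n}{n-1}\sum_{\ell>i}\lambda_\ell/d^i$ (justified by a coupling under SP, which is your item (2)), so the only probabilistic input is monotonicity, and the design of $x$ closes the inequality exactly --- no per-level decomposition of $\pi^{(k)}_s$ and no log-concavity are needed.
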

\begin{proof}
As discussed earlier, it is sufficient to consider agents that do not have answers.
Consider such an agent $v$ at level $i\leq h-1$.
Suppose $v$ will get more reward by generating $j$ sybils and attach its original subtree to the last sybil.


Consider the case that $v$ does not create sybils.
Conditioning on the even $\NA(v)$, i.e. $v$ does not have an answer,
by Proposition~\ref{prop:drprob},  the probability that $v$ receives the  direct referral fee is $\frac{\lambda_{i+1}}{d^{i}}\cdot \frac{n}{n-1}$ and the probability that the answer selected is in the subtree rooted at $v$ is $\sum_{j=i+1}^h \frac{\lambda_j}{d^{i}}\cdot
\frac{n}{n-1}$.

Notice that if $v$ generates sybils, both probabilities will decrease. This can be shown by coupling the randomness in both cases. For any fixed realization of the branching process and the answer placement, the probability that a particular answer in $v$'s subtree is selected is always higher (or equal) in the case that $v$ does not generate sybils by the SP answer selection scheme.

Now we consider the rewards in both cases. If $v$ does not generate sybils, the expected reward is
\begin{equation}
\label{eqn:rewardwithoutattack}
R_i^0 = \frac{n}{n-1}\cdot\left(\frac{\lambda_{i+1}}{d^{i}} \cdot x_i + \frac{\sum_{j=i+1}^h\lambda_j}{d^{i}}\right).
\end{equation}
The first part is the reward from direct referral $x_i$ and the second part comes from $y(i,s)$.
For the case that $v$ generates $j\ge 1$ sybils, the expected reward of $v$ and its sybils is at most
\begin{equation}
\label{eqn:rewardwithattack}
R_i^j\leq\frac{n}{n-1}\cdot\left(\frac{\lambda_{i+1}}{d^{i}} \cdot x_{i+j} + (j+1)\cdot \frac{\sum_{j=i+1}^h\lambda_j}{d^{i}}\right).
\end{equation}
By Eqn.~(\ref{eqn:xis}), $x_i \ge x_{i+j} + j\cdot\frac{\sum_{\ell=i+1}^h\lambda_\ell}{\lambda_{i+1}}$, which implies $R_i^0 \ge R_i^j$. This is a contradiction.
Therefore, agent $v$ will not benefit by generating sybils if it does not have the answer. Hence the DR mechanism is sybil-proof.
\end{proof}

%
%
%

\subsection{The expected cost of the DR mechanism}

Now we start analyzing the expected cost of the DR mechanism, which can be described as follows
\begin{align}
&\sum_{i=1}^h \lambda_i \cdot a_i+ \sum_{i=2}^h \lambda_i \cdot x_{i-1}+ \sum_{i=1}^h \lambda_i \cdot (i-1)
\label{eqn:total_cost}
\end{align}
The first term is the reward to the answer holder. The second term is the reward for the direct referral and the third term is for all other agents that forwarded the answer.
Before we analyze the cost in Eqn.~(\ref{eqn:total_cost}), we characterize the sequence $\{x_i\}$.

\begin{observation}
The sequence of $\{x_i\}$ is decreasing.
\end{observation}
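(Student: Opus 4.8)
The plan is to prove that the sequence $\{x_i\}$ defined by the recursion in Eqn.~(\ref{eqn:xis}) is decreasing, i.e., $x_i \ge x_{i+1}$ for all $1 \le i \le h-1$. The key observation is that $x_i$ is defined as a maximum over $j \in \{i+1, \ldots, h\}$ of the expression $x(j,1) + \frac{j-i}{\lambda_{i+1}}\sum_{\ell=i+1}^h \lambda_\ell$, and the maximum for the index $i$ includes a candidate term corresponding to $j = i+1$, which reduces essentially to $x_{i+1}$ plus a nonnegative quantity.

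Concretely, I would argue as follows. Fix $i \le h-1$. If $i+1 = h$, then $x_{i+1} = x(h,1)$; I would need to check the base case separately, noting that $x(h,1)$ is well-defined by the recursion with the single choice $j = h$ giving $x(h,1) = x(h,1) + 0$, which is consistent, and more importantly that $x_i \ge x(i+1,1) + \frac{1}{\lambda_{i+1}}\sum_{\ell=i+1}^h \lambda_\ell \ge x_{i+1}$ since $\sum_{\ell=i+1}^h \lambda_\ell \ge 0$ and $\lambda_{i+1} > 0$ (the $\lambda$'s are probabilities, strictly positive for a branching process with $b>1$). For the general case $i+1 \le h-1$, let $j^*$ be the maximizer defining $x_{i+1}$, so that $x_{i+1} = x(j^*,1) + \frac{j^*-(i+1)}{\lambda_{i+2}}\sum_{\ell=i+2}^h \lambda_\ell$ with $i+2 \le j^* \le h$. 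Then $j^*$ is also a valid candidate index for the maximum defining $x_i$ (since $i+1 \le i+2 \le j^* \le h$), so
\[
x_i \ge x(j^*,1) + \frac{j^*-i}{\lambda_{i+1}}\sum_{\ell=i+1}^h \lambda_\ell.
\]
It then suffices to show this lower bound is at least $x_{i+1}$, i.e.,
\[
\frac{j^*-i}{\lambda_{i+1}}\sum_{\ell=i+1}^h \lambda_\ell \;\ge\; \frac{j^*-i-1}{\lambda_{i+2}}\sum_{\ell=i+2}^h \lambda_\ell.
\]

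The remaining work is to verify this inequality. Since $\sum_{\ell=i+1}^h \lambda_\ell = \lambda_{i+1} + \sum_{\ell=i+2}^h \lambda_\ell \ge \sum_{\ell=i+2}^h \lambda_\ell$ and $j^* - i \ge j^* - i - 1 + 1$, I would split it: the ``extra'' factor $\frac{j^*-i}{\lambda_{i+1}}$ versus $\frac{j^*-i-1}{\lambda_{i+2}}$ needs a comparison of $\lambda_{i+1}$ and $\lambda_{i+2}$. Here the main obstacle is that $\lambda_{i+1}$ is \emph{not} always $\ge \lambda_{i+2}$ — by Lemma~\ref{lem:lambda_charaterization}, $\{\lambda_i\}$ is single-peaked, so before the peak $\lambda_{i+1} < \lambda_{i+2}$. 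Thus a naive termwise comparison fails in the increasing regime, and I expect this to be the crux of the argument. The resolution should be that in the increasing regime one instead uses $j^* - i \ge 2 \cdot$ something, or more cleanly, one bounds $x_i$ using a different candidate index than $j^*$: note that $x_i \ge x(j,1) + \frac{j-i}{\lambda_{i+1}}\sum_{\ell=i+1}^h\lambda_\ell$ for \emph{every} $j \in \{i+1,\dots,h\}$, so one has the freedom to pick the best $j$; in particular, taking $j$ to be the same maximizer used for $x_{i+1}$ but also testing $j = i+1$ (giving $x_i \ge x_{i+1}$ directly whenever $x(i+1,1) = x_{i+1}$ is itself the maximizer for level $i+1$, which handles the "flat"/decreasing part), while in the increasing part one exploits that $\sum_{\ell=i+1}^h\lambda_\ell$ dominates. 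I would carry out the clean version: first establish $x_i \ge x(i+1,1) + \frac{1}{\lambda_{i+1}}\sum_{\ell=i+1}^h\lambda_\ell$ unconditionally (the $j=i+1$ candidate), then show separately that $x(i+1,1) + \frac{1}{\lambda_{i+1}}\sum_{\ell=i+1}^h\lambda_\ell \ge x_{i+1}$, which unwinds to showing that for the maximizer $j^*$ of $x_{i+1}$, $\frac{1}{\lambda_{i+1}}\sum_{\ell=i+1}^h\lambda_\ell \ge \frac{j^*-i-1}{\lambda_{i+2}}\sum_{\ell=i+2}^h\lambda_\ell$, i.e., $x(i+1,1)$ together with one "unit" of the running sum already absorbs the whole increment $x_{i+1}$ carries; this should follow from the definition unwound over the $O(1)$ levels near the peak and geometric decay afterward, using parts (2)–(4) of Lemma~\ref{lem:lambda_charaterization}.
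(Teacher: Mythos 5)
Your proposal contains the correct argument, but you then talk yourself out of it and wander into an unnecessary and partly unsound detour. The whole proof is the one line you wrote for your ``base case'': the index $j=i+1$ is a legal candidate in the maximum defining $x_i=x(i,1)$, so
\[
x_i \;\ge\; x(i+1,1)+\frac{1}{\lambda_{i+1}}\sum_{\ell=i+1}^{h}\lambda_\ell \;\ge\; x_{i+1}+1 \;\ge\; x_{i+1},
\]
because $x(i+1,1)$ \emph{is} $x_{i+1}$ --- the paper explicitly abbreviates $x_j := x(j,1)$ --- and the added term is at least $1$ (the sum contains $\lambda_{i+1}$ itself). This holds verbatim for every $i\le h-1$, with no case split and no appeal to any property of $\{\lambda_i\}$ beyond positivity. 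The paper states the observation without proof, and this is evidently the intended reason. Your closing sentence, which claims that $x(i+1,1)+\frac{1}{\lambda_{i+1}}\sum_{\ell=i+1}^h\lambda_\ell\ge x_{i+1}$ ``unwinds to'' an inequality involving the maximizer $j^*$ of $x_{i+1}$, is where you go wrong: there is nothing to unwind, since the first summand already equals $x_{i+1}$.

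The detour itself is also not salvageable as written. First, the regime you flag as problematic is backwards: when $\lambda_{i+1}<\lambda_{i+2}$ (before the peak) you have $1/\lambda_{i+1}>1/\lambda_{i+2}$, so the termwise comparison is fine there; it is \emph{after} the peak that it fails. Second, the inequality you propose to verify, $\frac{1}{\lambda_{i+1}}\sum_{\ell=i+1}^h\lambda_\ell\ge\frac{j^*-i-1}{\lambda_{i+2}}\sum_{\ell=i+2}^h\lambda_\ell$, is simply false in general: take $h=i+3$ with $\lambda_{i+1}=0.9$, $\lambda_{i+2}=0.05$, $\lambda_{i+3}=0.04$; then $j^*=i+3$ and the left side is $1.1$ while the right side is $3.6$. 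In that same example the candidate $j^*$ plugged into the maximum for $x_i$ yields only $3.3$, which is below $x_{i+1}=3.6$, so bounding $x_i$ via the maximizer of $x_{i+1}$ genuinely does not suffice --- only the $j=i+1$ candidate does. (A small side point: $x(h,1)=0$ by the ``otherwise'' clause of the definition, not by a self-consistent fixed point of the recursion.)
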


\begin{proposition}\label{prop:xki}
For $i\ge \ell^*+1$, $x_i\le \gamma\cdot (h-i)$, where $\gamma$ is the constant in Lemma~\ref{lem:lambda_charaterization}.
\end{proposition}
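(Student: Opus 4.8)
The plan is to prove the bound by backward induction on $i$, from $i=h$ down to $i=\ell^*+1$, feeding the geometric tail estimate of property~(4) of Lemma~\ref{lem:lambda_charaterization} into the recursive definition~(\ref{eqn:xis}) of $x_i$.

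First I would record the base case. Since $x(h,1)=0$ by the ``otherwise'' branch of~(\ref{eqn:xis}) (equivalently, this is forced by $a_h=1$ together with~(\ref{eqn:tree:a_i})), we have $x_h = 0 = \gamma\cdot(h-h)$. For the inductive step, fix $i$ with $\ell^*+1 \le i \le h-1$ and assume $x_j \le \gamma(h-j)$ for all $j$ with $i+1 \le j \le h$. Every such $j$ satisfies $j \ge i+1 \ge \ell^*+2$, so the induction hypothesis is legitimately available, and, crucially, property~(4) of Lemma~\ref{lem:lambda_charaterization} applies at index $i+1$, giving $\sum_{\ell\ge i+1}\lambda_\ell \le \gamma\,\lambda_{i+1}$; in particular $\sum_{\ell=i+1}^h \lambda_\ell \le \gamma\,\lambda_{i+1}$. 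Hence, for each $j$ appearing in the maximum in~(\ref{eqn:xis}),
\[
x_j + \frac{j-i}{\lambda_{i+1}}\sum_{\ell=i+1}^h \lambda_\ell \;\le\; \gamma(h-j) + \gamma(j-i) \;=\; \gamma(h-i).
\]
Taking the maximum over $j\in\{i+1,\dots,h\}$ yields $x_i \le \gamma(h-i)$, which closes the induction.

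The argument is essentially bookkeeping, so there is no serious obstacle; the one point that needs care is the index alignment. Property~(4) of Lemma~\ref{lem:lambda_charaterization} only guarantees $\sum_{j\ge i}\lambda_j\le\gamma\lambda_i$ from level $i=\ell^*+2$ onward, and this is exactly why the statement starts at $\ell^*+1$: at level $i=\ell^*+1$ the tail sum that enters~(\ref{eqn:xis}) begins at index $i+1=\ell^*+2$, which is in the valid range, and for larger $i$ the starting index only moves further right. (The case $i=h$ is the trivial base case.) No estimate other than property~(4) and the recursion~(\ref{eqn:xis}) itself is required; the monotonicity of $\{x_i\}$ noted above is not needed here.
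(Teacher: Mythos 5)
Your proof is correct and follows essentially the same route as the paper's: backward induction from the base case $x_h=0$, applying property~(4) of Lemma~\ref{lem:lambda_charaterization} to bound the tail sum $\sum_{\ell=i+1}^h\lambda_\ell\le\gamma\lambda_{i+1}$ (valid since $i+1\ge\ell^*+2$) and combining it with the induction hypothesis inside the maximum. The index bookkeeping you flag is exactly the point the paper's proof also relies on, and your handling of it is correct.
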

\begin{proof}
We prove the statement by induction. The statement holds for $i=h$ as $x_h=0$. Suppose it holds for all $i \ge j$. Now consider $j-1 \ge \ell^*+1$. By construction,
\begin{equation*}
\begin{aligned}
x_{j-1} = \max_{\ell \ge j}\{ x_\ell + \frac{(\ell-j+1)}{\lambda_{j}}\sum_{s = j}^h\lambda_s \}
 	   \le \max_{\ell \ge j} \{ \gamma\cdot(h-\ell) + \gamma\cdot(\ell -j+1)                    \}
 	   = \gamma \cdot(h-j+1).
\end{aligned}
\end{equation*}
The inequality is by Lemma~\ref{lem:lambda_charaterization} (4).
\end{proof}

In other words, we will not pay a lot of referral fee for agents beyond level $\ell^*$.

\begin{proposition}\label{prop:xik}
For all $i\leq \ell^*$, we have $\lambda_{i+1}\cdot x_i \leq (\gamma+1)\cdot (h-i)$.
\end{proposition}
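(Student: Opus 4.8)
\emph{Proof plan.} The plan is to push the recursive definition of $x_i=x(i,1)$ from Eqn.~(\ref{eqn:xis}) through a downward induction, carrying the weight $\lambda_{i+1}$ along and using the single-peakedness of $\{\lambda_i\}$ from Lemma~\ref{lem:lambda_charaterization}(1) together with Proposition~\ref{prop:xki} (the companion bound $x_i\le\gamma(h-i)$ for $i\ge\ell^*+1$). Fix $i\le\ell^*$ and let $j^*\in\{i+1,\dots,h\}$ attain the maximum defining $x_i$, so that
\[
\lambda_{i+1}\,x_i \;=\; \lambda_{i+1}\,x_{j^*}\;+\;(j^*-i)\sum_{\ell=i+1}^{h}\lambda_\ell .
\]
Since $\sum_{\ell=i+1}^{h}\lambda_\ell\le\sum_{\ell\ge 1}\lambda_\ell\le 1$ and $\gamma>1$, the last term is at most $(j^*-i)\le(\gamma+1)(j^*-i)$; so it is enough to establish $\lambda_{i+1}\,x_{j^*}\le(\gamma+1)(h-j^*)$, and then add the two bounds to get $(\gamma+1)(h-i)$.

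First I would record the auxiliary estimate $x_{\ell^*}\le(\gamma+1)(h-\ell^*)$. In $x_{\ell^*}=\max_{j\ge\ell^*+1}\{x_j+\frac{j-\ell^*}{\lambda_{\ell^*+1}}\sum_{\ell=\ell^*+1}^{h}\lambda_\ell\}$ every index $j$ in the maximum satisfies $j\ge\ell^*+1$, so Proposition~\ref{prop:xki} gives $x_j\le\gamma(h-j)$; and Lemma~\ref{lem:lambda_charaterization}(4) applied at $\ell^*+2$, combined with $\lambda_{\ell^*+2}\le\lambda_{\ell^*+1}$ (single-peakedness again), gives $\sum_{\ell=\ell^*+1}^{h}\lambda_\ell\le(1+\gamma)\,\lambda_{\ell^*+1}$. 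Substituting, each term inside the maximum is at most $\gamma(h-j)+(1+\gamma)(j-\ell^*)=\gamma h+j-(1+\gamma)\ell^*$, which is increasing in $j$ and hence largest at $j=h$, giving $x_{\ell^*}\le(1+\gamma)(h-\ell^*)$. Together with Proposition~\ref{prop:xki} this shows $x_j\le(\gamma+1)(h-j)$ for \emph{every} $j\ge\ell^*$.

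Then I would prove $\lambda_{i+1}x_i\le(\gamma+1)(h-i)$ for $i\le\ell^*$ by downward induction on $i$, via the reduction above, splitting on the location of the maximizer $j^*$. If $j^*\ge\ell^*$, the previous paragraph gives $x_{j^*}\le(\gamma+1)(h-j^*)$, and since $\lambda_{i+1}\le 1$ we get $\lambda_{i+1}x_{j^*}\le(\gamma+1)(h-j^*)$; this also covers the base case $i=\ell^*$, where necessarily $j^*\ge\ell^*+1$. If $j^*<\ell^*$, then $i+1\le j^*+1\le\ell^*$, so $\lambda_{i+1}\le\lambda_{j^*+1}$ by single-peakedness, and the induction hypothesis at $j^*$ (valid since $i<j^*\le\ell^*$) gives $\lambda_{j^*+1}x_{j^*}\le(\gamma+1)(h-j^*)$, whence $\lambda_{i+1}x_{j^*}\le\lambda_{j^*+1}x_{j^*}\le(\gamma+1)(h-j^*)$. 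In both cases adding the $(\gamma+1)(j^*-i)$ bound on the tail term finishes the step.

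The crux is the ``lookahead'' in the recursion: $x_i$ is defined through $x_{j^*}$ for a $j^*$ that may lie far below the peak $\ell^*$, where $\lambda$ is much smaller, so one cannot bound $x_i$ itself uniformly (indeed $\frac{1}{\lambda_{i+1}}\sum_{\ell>i}\lambda_\ell$ can blow up below the peak), and the $\lambda_{i+1}$ factor must be transported through the induction. The delicate case is $j^*<\ell^*$, where one needs exactly $\lambda_{i+1}\le\lambda_{j^*+1}$ from Lemma~\ref{lem:lambda_charaterization}(1); and the boundary situation $j^*=\ell^*$ (where $\lambda_{\ell^*+1}$ has already fallen below the peak, so the single-peaked inequality no longer helps) is what makes the separate estimate on $x_{\ell^*}$ necessary rather than a uniform appeal to Proposition~\ref{prop:xki}.
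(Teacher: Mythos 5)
Your proof is correct and follows essentially the same route as the paper's: a downward induction with a separate base estimate $x_{\ell^*}\le(\gamma+1)(h-\ell^*)$ obtained from Proposition~\ref{prop:xki} and Lemma~\ref{lem:lambda_charaterization}(4), the tail term bounded via $\sum_\ell\lambda_\ell\le 1$, and a case split on whether the maximizer $j^*$ lies at or beyond the peak (where $\lambda_{i+1}\le 1$ suffices) or strictly below it (where single-peakedness gives $\lambda_{i+1}\le\lambda_{j^*+1}$ and the induction hypothesis applies). Your closing remark correctly identifies why the boundary case $j^*=\ell^*$ forces the separate estimate, which is exactly the role of Eqn.~(\ref{eqn:xellstar}) in the paper.
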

\begin{proof}
The statement is proved by induction. We first consider the case $i=\ell^{*}$. By definition in Eqn.~(\ref{eqn:xis}),
\begin{align}
x_{\ell^*} &= \max_{j\ge \ell^*+1}\{x(j)+ \frac{j-\ell^*}{\lambda_{\ell^*+1}}\sum_{\ell = \ell^*+1}^h\lambda_\ell \}
\leq\max_{j\ge \ell^*+1}\{x(j)+ (j-\ell^*)\cdot(1+\frac{\gamma\cdot \lambda_{\ell^*+2}}{\lambda_{\ell^*+1}})\}\nonumber\\
&\leq\max_{j\ge \ell^*+1}\{\gamma\cdot(h-j)+ (j-\ell^*)\cdot(1+\gamma)\}
\leq (\gamma+1)\cdot (h-\ell^*). \label{eqn:xellstar}
\end{align}
The first inequality is by Lemma~\ref{lem:lambda_charaterization} (4) and the second inequality is by Lemma~\ref{lem:lambda_charaterization} (1) and Proposition~\ref{prop:xki}.

Now suppose it holds for all $j\leq i\leq \ell^*$. Consider $j-1\ge 1$, we have
\begin{equation}
\begin{aligned}
\lambda_j\cdot x_{j-1} = \max_{\ell \ge j}\{ \lambda_j\cdot x_\ell + (\ell-j+1)\sum_{k= j}^h\lambda_k \}
	   \leq  \max_{\ell \ge j}\{ \lambda_j\cdot x_\ell + (\ell-j+1) \}.
	   \label{eqn:max_ell}
\end{aligned}
\end{equation}
Assume  the term $\max_{\ell \ge j}\{ \lambda_j x_\ell + (\ell-j+1) \}$ is optimized at $\ell = j^*$.
Consider two cases:

\emph{Case 1:} $j^*\ge \ell^*$. In this case,
$\lambda_{j}\cdot x_{j^*} + (j^*-j+1) \leq (\gamma+1)\cdot(h-j^{*})+ (j^{*}-j+1)\leq (\gamma+1)\cdot(h-j+1)$ by Proposition~\ref{prop:xki} and Eqn.~(\ref{eqn:xellstar}).

\emph{Case 2:} $j\le j^* <\ell^*$. We have
\begin{equation*}
\begin{aligned}
\lambda_j\cdot x_{j^*} +(j^*-j+1) \leq \lambda_{j^*+1}x_{j^*} +(j^*-j+1)
&\leq (\gamma+1)\cdot(h-j^*)+(j^*-j+1)\\
&\leq (\gamma+1)\cdot(h-j+1).
\end{aligned}
\end{equation*}
The first inequality comes from $\lambda_j \leq \lambda_{j^*+1}$ by Lemma~\ref{lem:lambda_charaterization} (1).
The second inequality is by induction. Therefore, in both cases, we have
$\lambda_j\cdot x_{j-1} \leq (\gamma+1)(h-j+1)$.
\end{proof}

We are ready to show the expected cost of the above mechanism.

\begin{lemma}
The expected reward of the DR query incentive mechanism is $O(h^2)$.
\end{lemma}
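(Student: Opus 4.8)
The plan is to bound the three summations in Eqn.~(\ref{eqn:total_cost}) separately. The forwarding term is trivial: $\sum_{i=1}^h\lambda_i(i-1)\le(h-1)\sum_{i=1}^h\lambda_i\le h$. For the answer-holder term I substitute Eqn.~(\ref{eqn:tree:a_i}), $a_i=(h-i+1)+\sum_{j=i}^h x_j$, and swap the order of summation; since $\sum_{i=1}^j\lambda_i=1-\phi_j$, this gives
\[
\sum_{i=1}^h\lambda_i a_i=\sum_{i=1}^h\lambda_i(h-i+1)+\sum_{j=1}^h x_j(1-\phi_j)\le (h+1)+\sum_{j=1}^h x_j(1-\phi_j).
\]
So the theorem reduces to showing $\sum_{j=1}^h x_j(1-\phi_j)=O(h^2)$ and, for the direct-referral term, $\sum_{i=2}^h\lambda_i x_{i-1}=\sum_{i=1}^{h-1}\lambda_{i+1}x_i=O(h^2)$.

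For both reduced sums I split the index at $\ell^*$ and aim for a per-term bound of order $h-j$, which then sums to $O(h^2)$ regardless of how $\ell^*$ compares to $h$. Past the peak, for $j\ge\ell^*$, Proposition~\ref{prop:xki} together with Eqn.~(\ref{eqn:xellstar}) gives $x_j\le(\gamma+1)(h-j)$; hence $x_j(1-\phi_j)\le(\gamma+1)(h-j)$ and, since $\lambda_{j+1}\le1$, also $\lambda_{j+1}x_j\le(\gamma+1)(h-j)$. Before the peak, for $i<\ell^*$, Proposition~\ref{prop:xik} directly yields $\lambda_{i+1}x_i\le(\gamma+1)(h-i)$, which already settles the direct-referral term; for the other sum I write $x_j\le(\gamma+1)(h-j)/\lambda_{j+1}$ (again from Proposition~\ref{prop:xik}) to get $x_j(1-\phi_j)\le(\gamma+1)(h-j)\cdot\frac{1-\phi_j}{\lambda_{j+1}}$.

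The only nontrivial point is thus a uniform bound $\frac{1-\phi_j}{\lambda_{j+1}}=O(1)$ for $j<\ell^*$. For $j\le\ell^1$ I would invoke Lemma~\ref{lem:lambdai+1}: since $\phi_j\ge\phi_{\ell^1}\ge 1-\epsilon$ and $\epsilon<1-\zeta$ (Eqn.~(\ref{eqn:epsilon2})), one has $\phi_j-\zeta\ge 1-\epsilon-\zeta=\Omega(1)$, so $\frac{1-\phi_j}{\lambda_{j+1}}\le\max\{\frac{1}{b-1},\frac{1}{1-\epsilon-\zeta}\cdot\frac{1}{1-\Psi'(\zeta)}\}=O(1)$, using $\Psi'(\zeta)<1$ for $b>1$. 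For the remaining indices $\ell^1<j<\ell^*$, Lemma~\ref{lem:lambda_charaterization}(3) guarantees there are only $O(1)$ of them, and by monotonicity of $\{\lambda_i\}$ up to $\ell^*$ (Lemma~\ref{lem:lambda_charaterization}(1)) together with the constant lower bound $\lambda_{\ell^1+1}=\Omega(1)$ established inside the proof of Lemma~\ref{lem:lambda_charaterization}(3), one gets $\lambda_{j+1}\ge\lambda_{\ell^1+1}=\Omega(1)$, hence $\frac{1-\phi_j}{\lambda_{j+1}}=O(1)$. (For $j\le\ell^1$ one can instead avoid Lemma~\ref{lem:lambdai+1} altogether via the geometric growth of Lemma~\ref{lem:lambda_charaterization}(2): $1-\phi_j=\sum_{i\le j}\lambda_i\le\lambda_{j+1}\sum_{m\ge1}\rho^{-m}\le\lambda_{j+1}/(\rho-1)$.)

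Putting the pieces together, $\sum_{j=1}^h x_j(1-\phi_j)=O(1)\cdot\sum_{j=1}^h(h-j)=O(h^2)$ and $\sum_{i=1}^{h-1}\lambda_{i+1}x_i=O(1)\cdot\sum_{i=1}^{h-1}(h-i)=O(h^2)$, so the expected cost is $O(h^2)+O(h^2)+O(h)=O(h^2)$. The delicate step — and the reason the technical groundwork of Lemmas~\ref{lem:lambda_charaterization} and~\ref{lem:lambdai+1} is needed — is controlling $\frac{1-\phi_j}{\lambda_{j+1}}$ at and just before the peak level $\ell^*$, where both numerator and denominator are vanishingly small; once that ratio is shown to be a constant, everything else is routine summation.
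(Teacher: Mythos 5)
Your proof is correct and follows essentially the same route as the paper: the same decomposition of the cost into the three terms of Eqn.~(\ref{eqn:total_cost}), the same reliance on Propositions~\ref{prop:xki} and~\ref{prop:xik}, and the same three-way split of the pre-peak/post-peak indices at $\ell^1$ and $\ell^*$ using the geometric growth and the $\lambda_{\ell^1}=\Omega(1)$ bound from Lemma~\ref{lem:lambda_charaterization}. The only cosmetic difference is that you package the pre-peak answer-holder estimate as a uniform bound on $\frac{1-\phi_j}{\lambda_{j+1}}$ (offering Lemma~\ref{lem:lambdai+1} as an alternative to the geometric-growth argument for $j\le\ell^1$), whereas the paper bounds $\sum_{j<\ell^1}x_j\sum_{i\le j}\lambda_i$ by reusing the already-established referral-fee bound; both yield the same $O(h^2)$ conclusion.
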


\begin{proof}
By Proposition~\ref{prop:xik} and Proposition~\ref{prop:xki}, the total referral fee is
\begin{align}
\sum_{i=2}^h \lambda_i \cdot x_{i-1} \leq \sum_{i=2}^{\ell^{*}}(\gamma+1)(h-i) + \gamma\cdot (h-\ell^{*}-1)\cdot \sum_{j\ge \ell^*+1}\lambda_j=O(h^2).\label{eqn:totaldrfee}
\end{align}





Now we analyze the total expected reward for answer holders.
\begin{align}
\sum_{i=1}^h \lambda_{i}\cdot a_i = \sum_{i=1}^h \lambda_{i}\cdot (h-i+1 +\sum_{j={i}}^hx_j   )
& \le h + \sum_{j=1}^h x_j \sum_{i=1}^{j} \lambda_i \nonumber\\
&\leq h + \sum_{j=1}^{\ell^1-1} x_j \sum_{i=1}^{j}\lambda_i + \sum_{j=\ell^1}^{\ell^*} x_j + \sum_{j=\ell^*+1}^{h} x_j, \label{eqn:reward_answer_holder}
\end{align}

where $\ell^{1}$ is defined in Lemma~\ref{lem:lambda_charaterization}. We inspect the terms individually. Consider the second term in Eqn.~(\ref{eqn:reward_answer_holder}). By Lemma~\ref{lem:lambda_charaterization} (2), $\{\lambda_i\}$ grows at a rate of $\rho >1$ until $\lambda_{\ell^1}$ for $\ell^1 \le \ell^*$. Therefore, for any $i\leq \ell^1$, $\sum_{j=1}^i \lambda_j \leq \frac{\rho}{\rho-1}\lambda_i$.
Then
\[
\sum_{j=1}^{\ell^1-1} x_j \sum_{i=1}^{j}\lambda_i \leq \sum_{j=1}^{\ell^1-1} x_j \cdot \lambda_{j}\cdot \frac{\rho}{\rho-1}\leq \sum_{j=1}^{\ell^1-1} x_j \cdot \lambda_{j+1}\cdot \frac{\rho}{\rho-1}= O(h^2).
\]
The second inequality is by the fact $\lambda_i\leq \lambda_{i+1}$ for $i\leq \ell^{*}-1$ in Lemma~\ref{lem:lambda_charaterization} (1) and the last equality comes from Eqn.~(\ref{eqn:totaldrfee}).

For the third term, by Proposition~\ref{prop:xik} for any $\ell^1 \leq j \leq \ell^*$,
\[
x_j
\leq (\gamma+1)(h-j)\cdot \frac{1}{\lambda_{j+1}}.
\leq (\gamma+1)(h-j)\cdot \frac{1}{\lambda_{\ell^1}} = O(h).
\]

The last equality comes from $\lambda_{\ell^*}\ge \lambda_j \ge \lambda_{\ell^1} = \Omega(1)$ by Lemma~\ref{lem:lambda_charaterization} (3). Also by Lemma~\ref{lem:lambda_charaterization} (3), $\ell^* - \ell^1 = O(1)$ and the third term in Eqn.~(\ref{eqn:reward_answer_holder}) is $O(h)$.

Finally, by Proposition~\ref{prop:xki}, the last term in Eqn.~(\ref{eqn:reward_answer_holder}) is $O(h^2)$. We conclude that
\begin{equation}
\label{eqn:reward_answer_holder_bound}
\sum_{i=1}^h \lambda_{i}\cdot a_i = O(h^2).
\end{equation}
Combining Eqn.~(\ref{eqn:totaldrfee}) and Eqn.~(\ref{eqn:reward_answer_holder_bound}), the expected cost of Eqn.~(\ref{eqn:total_cost}) is $O(h^2)$.
\end{proof}

We obtain the main result of this paper.

{\sc Theorem~\ref{thm:tree} (restated)}
{\em 	\thmTree
}

\section{Conclusion}

In this paper, we developed a set of sybil-proof query incentive
mechanisms which allocate most rewards to the selected answer holder
as well as its direct referral. We show that such a direct referral
mechanism, with properly designed rewards, is efficient, with expected
cost $O(h^2)$ for any $b>1$. In contrast, all previous mechanisms
require a reward exponential to $h$ when $h$ is large enough for
finding an answer with high probability. We also showed that if the underlying branching process is a chain, our DR mechanism is optimal with some mild assumptions.
Since our reward scheme has very simple structure, it might have good chance to be adopted in practice.

One drawback of our DR mechanism is that it is only efficient in
expectation. In some cases, e.g., when the first agent has an answer,
the cost will be exponential to $h$ even in the case of $b>2$ and $h
=O(\log n)$. As a result, our mechanism is not collusion-proof. Agents
can collude with each other to improve their overall reward.  It is an
interesting open problem to design efficient sybil-proof incentive
mechanisms to address these issues.

\bibliographystyle{acmsmall}
\bibliography{reference}

\newpage
\appendix

\section{The optimality of the DR mechanism on chains}
\label{section:chainopt}



We first characterize such {\em regular } sybil-proof mechanisms.

\begin{lemma}\label{pro:strongchain}
On a chain, if a {\em regular} query incentive mechanism is sybil-proof, its reward scheme will only reward the first answer holder and its ancestors.
\end{lemma}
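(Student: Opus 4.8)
The plan is to read the claim off almost directly from the strategy profile that defines sybil-proofness, using only that the answer selection is complete and deterministic on a chain. Since the mechanism is assumed sybil-proof, that profile is a Nash equilibrium, and it is this equilibrium play whose reward structure (and hence cost) we must understand; so it suffices to determine which agents lie on the answer path when this profile is played on a chain.

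First I would fix an arbitrary placement of answers on the chain --- on a deterministic chain this is the only source of randomness --- and trace the query under the equilibrium profile. By the profile, every agent at levels $1,\dots,h-1$ without an answer propagates, every agent at those levels that has an answer reports it to its parent and does \emph{not} propagate, and every agent at level $h$ reports an answer if it has one but never propagates. Consequently the query reaches the agent at level $k$ precisely when $k \le h$ and none of the agents at levels $1,\dots,k-1$ holds an answer, so the set of reachable answers is either empty (when no answer occurs among the first $h$ agents) or the single first answer holder.

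Next, because the answer selection is deterministic and, by regularity, must return an answer whenever one exists among the first $h$ agents, the selected answer --- when it exists --- is forced to be that unique reachable node, the first answer holder. Hence the answer path $P$ is always the path from the root to the first answer holder, whose nodes are exactly the first answer holder together with its ancestors; since the oblivious reward allocation scheme pays only agents lying on $P$, only the first answer holder and its ancestors are ever rewarded. It is worth recording, for the optimality argument that follows, that this lets us describe every regular sybil-proof reward scheme on the chain by a function $r(i,s)$ assigning the reward to the $i$-th agent when the first answer sits at level $i+s$.

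I do not expect a genuine computational obstacle here, since the statement is structural; the step requiring the most care is the logical role of sybil-proofness. We do not use the Nash condition to extract a new inequality --- only to certify that the prescribed profile, in which an answer holder declines to propagate, is the play under analysis. That single feature is what collapses the reachable answers to one node and makes the answer-selection rule irrelevant; without it, a deterministic complete selection rule could legitimately pick a deeper reachable answer and thereby pay a strict descendant of the first answer holder, so the proof genuinely hinges on this observation rather than on any calculation.
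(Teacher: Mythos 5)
Your proof is correct, but it reaches the conclusion by a genuinely different route than the paper. The paper argues by contradiction through a sybil deviation: if the scheme ever paid agents strictly below the first answer holder, that holder (who can observe that all its ancestors lack answers) could manufacture the remainder of the chain as sybils and pocket all of those rewards, which are each at least $1$ by normalization; this would be a profitable deviation, contradicting sybil-proofness. You instead read the claim directly off the strategy profile that \emph{defines} sybil-proofness: since the first answer holder does not propagate under that profile, it is the unique reachable answer on a chain, completeness forces its selection, and the answer path therefore terminates at it. Both arguments are sound, and yours is arguably the more rigorous one under the paper's exact definition (it also correctly notes that determinism is barely needed, since there is only one reachable answer to select). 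The trade-off is worth noting: your argument uses sybil-proofness only to certify which profile is being played, so the conclusion would hold for \emph{any} regular mechanism under that profile; the paper's argument genuinely invokes the Nash condition and the normalization, and would survive under a broader definition of equilibrium in which an answer holder were permitted to propagate the query onward --- in that setting your reachability collapse fails and the sybil-capture deviation is what rules out rewarding descendants.
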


\begin{proof}
Consider any final answer configuration of a chain with length $h$. If the reward scheme will reward agents beyond the first answer holder,
in our sybil-attacking model, the first answer holder can produce the remaining chain by its sybils and collect all remaining rewards. Notice that the agents can observe the results of its ancestors. Therefore, such reward scheme is not sybil-proof.
%
%
\end{proof}

Since any {\em regular} sybil-proof mechanism only rewards the first answer, we can use the reward framework given in Section~\ref{sec:dr_chain}. In particular, we define
$r(i,s)$ as the reward to the $i$-th agent on an answer path with length $i+s\le h$, i.e., the first answer appears at level $i+s$.

\begin{proof}[of Theorem~\ref{thm:chainopt}]
For simplicity, we define $a(i) = r(i,0)$.
Consider another {\em regular} mechanism defines reward $\hat r(i,s)$ to the $i$-th node if the selected answer holder is at the $i+s$ position and reward $\hat a(j)$ to the answer holder if the answer is at the $j$-th position.
In order to show the optimality of our mechanism, we only need to prove that $\hat r(i,s)\geq r(i,s)$ and $\hat a(j)\geq a(j)$.

We first show that $\hat r(i,s)\geq r(i,s)$. Note that we only need to prove this for $i\leq h-1$ and $s=1$, since in other cases the inequality is obviously correct.

Let $\hat R(i,k)=\sum_{s=1}^{h-i-k}p(1-p)^{s-1}\sum_{j=0}^{k}\hat r(i+j,s+k-j)$, which is the expected reward that the node of depth $i$ can get by creating $k$ {\em additional} sybils conditioned on the event that all previous nodes including itself do not have any answer. We inductively show that $\hat r(i,s)\geq r(i,s)$ for $i\leq h-1$ and $s=1$. This is clearly true for $i=h-1$. Assume that this is true for $i+1$. For $i$, since the other mechanism is sybil-proof in a Nash equilibrium, we have
\begin{align}
\hat R(i,0)\geq \hat R(i,1)
&=\sum_{s=1}^{h-i-1}p(1-p)^{s-1}\hat r(i,s+1)+\sum_{s=1}^{h-i-1}p(1-p)^{s-1}\hat r(i+1,s)\nonumber\\
&\ge \sum_{s=1}^{h-i-1}p(1-p)^{s-1}\hat r(i,s+1)+\sum_{s=1}^{h-i-1}p(1-p)^{s-1}r(i+1,s) \label{eqn:app_induction}\\
&=\sum_{s=1}^{h-i-1}p(1-p)^{s-1}\hat r(i,s+1)+R_{i+1}. \nonumber
\end{align}
The inequality in Eqn.~(\ref{eqn:app_induction}) is by induction.
So we have
\begin{equation}
\begin{aligned}
\hat R(i,0)=p\cdot \hat r(i,1)+\sum_{s=2}^{h-i}p(1-p)^{s-1}\hat r(i,s)
\geq \sum_{s=1}^{h-i-1}p(1-p)^{s-1}\hat r(i,s+1)+R_{i+1}
\end{aligned}
\end{equation}

Rearranging the terms in the above inequality, we have
\begin{align*}
p\cdot \hat r(i,1)&\geq \sum_{s=1}^{h-i-1}p(1-p)^{s-1}\hat r(i,s+1)-\sum_{s=2}^{h-i}p(1-p)^{s-1}\hat r(i,s)+R_{i+1}\\
& = \sum_{s=1}^{h-i-1}p(1-p)^{s-1}\hat r(i,s+1)-\sum_{s=1}^{h-i-1}p(1-p)^{s}\hat r(i,s+1)+R_{i+1}\\
&=p\cdot \sum_{s=1}^{h-i-1}p(1-p)^{s-1}\hat r(i,s+1) + R_{i+1}\\
&\ge p\cdot P_{h-i-1} + R_{i+1} \\
&=p\cdot r(i,1)
\end{align*}
Thus, $\hat r(i,1)\geq r(i,1)$.

Next we will inductively show that $\hat a(j)\geq a(j)$. For $j=h$, $\hat a(j)\geq 1=a(j)$. Assume that $\hat a(j)\geq a(j)$ for all $j\geq i+1$, then for $j=i$, by the sybil-proofness of the other mechanism, we have
\begin{equation}
\begin{aligned}
\hat a(i)\geq \hat r(i,1)+\hat a(i+1)\geq r(i,1)+a(i+1)=a(i)
\end{aligned}
\end{equation}
 Combining all together, we have shown the optimality of our DR mechanism.
\end{proof}

\end{document}